\documentclass[a4paper,11pt]{article}
\usepackage[utf8x]{inputenc}

\usepackage{pstricks,pst-node,pst-tree}
\usepackage{newicktree}

\usepackage[arrow, matrix, curve]{xy}
\usepackage{array,amsmath,amssymb,amsthm,verbatim,epsfig}
\usepackage{amsfonts}
\usepackage{graphicx}
\usepackage{setspace}

%----------------------------------------------------------
\theoremstyle{plain}
\newtheorem{theorem}{Theorem}

\newtheorem{algorithm}[theorem]{Algorithm}

\newtheorem{corollary}[theorem]{Corollary}

\newtheorem{definition}[theorem]{Definition}

\newtheorem{lemma}[theorem]{Lemma}

\newtheorem{remark}[theorem]{Remark}

\numberwithin{equation}{section}

\newcommand{\suchThat}{\lvert}

%opening
\title{Estimating Species Trees from Quartet Gene
Tree Distributions under the Coalescent Model} \author{Martin Kreidl}

\begin{document}

\onehalfspacing
\maketitle

\begin{abstract} In this article we propose a new method,
	which we name `quartet neighbor joining', or
	`quartet-NJ', to infer an unrooted species tree on a
	given set of taxa $T$ from empirical distributions
	of unrooted quartet gene trees on all four-taxon
	subsets of $T$. In particular, quartet-NJ can be
	used to estimate a species tree on $T$ from
	distributions of gene trees on $T$. The quartet-NJ
	algorithm is conceptually very similar to classical
	neighbor joining, and its statistical consistency
	under the multispecies coalescent model is proven by
	a variant of the classical `cherry picking'-theorem.
	In order to demonstrate the suitability of
	quartet-NJ, coalescent processes on two different
	species trees (on five resp.\ nine taxa) were
	simulated, and quartet-NJ was applied to the
	simulated gene tree distributions. Further,
	quartet-NJ was applied to quartet distributions
	obtained from multiple sequence alignments of 28
	proteins of nine prokaryotes. \end{abstract}

\tableofcontents

\section{Introduction}

%A (the) typical problem in molecular phylogenetics is to
%reconstruct the evolutionary history of DNA-, RNA- or amino
%acid sequences from alignments. The result is usually
%displayed as a binary tree which best (in a sense to be
%specified) explains the similarities or dissimilarities of
%the sequences under consideration. In common terminology, such a tree,
%which relates sequences, is called a `gene tree'.
%Usually, the relationships between homologous sequences of
%a set of taxa are expected to describe evolutionary
%relationships between the taxa themselves.

As the amount of available sequence data from genomes or
proteoms rapidly grows, one is confronted with the problem
that, for a given set of taxa, tree topologies relating
homologous sequences of these taxa can (and in practice
often will) differ from the each other, depending on which
locus in the genome or which protein is used for
constructing the tree. Possible reasons for discordance
among gene trees and the species phylogeny include horizontal
gene transfer, gene duplication/loss, or incomplete lineage
sorting (deep coalescences). Thus one is confronted with the
task to determine the phylogeny of the taxa (the `species
tree') from a set of possibly discordant gene trees
(Maddison \cite{Maddison_97}, and Maddison and Knowles
\cite{Maddison_Knowles}).
\newline

%It is well known that phylogenetic trees depiciting the
%evolutionary histories of sequences of nucleotides or amino
%acids, respectively, within a given set of taxa, may very
%well disagree with the true species phylogeny. This can have
%several reasons, such as horizontal gene transfer, gene
%duplication/extinction, or incomplete lineages sorting. (Maddison
%\cite{maddison-97} and Maddison-Knowles \cite{MK-06}).

Several methods have been proposed and are used to infer a
species tree from a set of possibly discordant gene trees:

(1) Declaring the most frequently observed gene tree to be
the true species tree was shown to be statistically
inconsistent under the multispecies coalescent model by
Degnan and Rosenberg \cite{Degnan_Discordance}, in cases
where branch lengths of the underlying species tree are
sufficiently small.

(2) A popular approach for species tree reconstruction is by
concatenating multiple alignments from several loci to one
large multiple alignment and construct a single `gene tree'
from this multiple alignment by standard methods. This
approach was pursued e.g.\ by Teichmann and Mitchison
\cite{Teichmann_Mitchison} and Cicarelli et al.\
\cite{Cicarelli}. However, also this method was shown to be
inconsistent under the multispecies coalescent by Degnan and
Kubatko \cite{Degnan_Concat}, if branch lengths on the
species tree are sufficiently short.

(3) Similarly, the concept of minimizing the number of `deep
coalescences' (Maddison \cite{Maddison_97}) was recently
shown to be statistically inconsistent under the
multispecies coalescent model by Than and Rosenberg
\cite{Than_Rosenberg}.

(4) On the other hand, it is well known from coalescent
theory that the probability distribution of gene trees, or
even the distributions rooted gene triplet trees,
on a species tree uniquely determine the species tree if
incomplete lineage sorting is assumed to be the only source
of gene tree discordance. Similarly (see Allman et al.
\cite{Allman_Unrooted}), \emph{the distributions of unrooted
quartet gene trees identify the unrooted species tree
topology.} However, as soon as triplet/quartet gene trees are
inferred from experimental data, their distributions will
differ from the theoretical ones, and this may lead to
conflicts among the inferred triplets/quartets on the
hypothetical species tree, a problem which is not straight
forward to resolve (`quartet-puzzeling' could be an approach
to resolve this, see Strimmer and von Haeseler
\cite{Haeseler_Quartet}). Also direct maximum likelihood calculations
using gene tree distributions are very problematic due to
the enormous number of possible gene trees on a given set of
taxa.

(5) Recently Liu and Yu \cite{Liu_Neighbor} have proposed to
use the `average internode distance' on gene trees as a
measure of distance between taxa on species trees and apply
classical neighbor joining to these distance data. They
prove that this reconstructs the underlying species tree in
a statistically consistent way.
\newline

In the present paper we propose a method to overcome many of
the difficulties discussed above, and in particular to make
the above mentioned result by Allman et al.\
\cite{Allman_Unrooted} accessible in practice. That is, we
describe a polynomial time algorithm (in the
number of taxa), which uses empirical distributions of
unrooted quartet gene trees as an input, to estimate the
unrooted topology of the underlying species tree in a
statistically consistent way. Due to the conceptual
similarity to classical neighbor joining we call this
algorithm `quartet neighbor joining', or briefly
`quartet-NJ'.

The fact that quartet-NJ uses quartet distributions as an
input makes it flexible in practice: Quartet gene tree
distributions can be obtained directly from sequence data
(as is described in the application to a prokaryote data set
in Section \ref{sect:AppSim} of this paper), or e.g.\ from
gene tree distributions, which is the type of input required
by Liu and Yu's `neighbor joining for species trees'
\cite{Liu_Neighbor}. Also, quartet-NJ is naturally capable
of dealing with multiple lineages per locus and taxon.

The paper is organized as follows: After a brief review of
the multispecies coalescent model and distributions of
quartet gene trees in Section \ref{sect:MSCM}, we
investigate in Section \ref{sect:CherryPicking} how to
identify cherries on a species tree. To this end we show how
to assign `weights' to unrooted quartet trees on the set of
taxa, using the distributions of quartet gene trees, and how
to define a `depth' for each pair of taxa. In analogy to
classical cherry picking we prove that under the
multispecies coalescent model any pair of taxa with maximal
depth is a cherry on the species tree. Moreover, we give an
interpretation of this theorem by the concept of `minimal
probability for incomplete lineage sorting', in analogy to
the concept of minimal evolution underlying classical
neighbor joining.

In Section \ref{sect:Algorithms} we translate our cherry
picking theorem into a neighbor joining-like procedure
(`quartet-NJ') and
prove that it reproduces the true unrooted species tree
topology as the input quartet distributions tend to the
theoretical quartet distributions. In other words,
quartet-NJ is statistically consistent.

Finally, in Section \ref{sect:AppSim} we apply the quartet-NJ
algorithm to data from coalescence simulations, as well as
to a set of multiple alignments from nine prokaryotes, in
order to demonstrate the suitability of quartet-NJ. In both
situations we consider only one lineage per locus and taxon.

\section{The multispecies coalescent model}\label{sect:MSCM}

We are going to present briefly the multispecies coalescent,
which models gene (or protein) tree evolution within a fixed
species tree. The material in this section is neither new
nor very deep. Rather this section is to be considered as a
reminder for the reader on the relevant definitions, as well
a suitable place to fix language and notation for the rest
of the paper. For a more detailed introduction to the
multispecies coalescent model we refer e.g.\ to Allman et al.\
\cite{Allman_Unrooted}.

\subsection{Overview}

Let us consider a set $T$ of $n$ taxa (e.g.\ species) and a
rooted phylogenetic (that is: metric with strictly positive
edge lengths and each internal node is trivalent) tree
$S$ on the taxon set $T$. The tree $S$
is assumed to depict the `true' evolutionary relationship
among the taxa in $T$, and the lengths of its internal
edges are measured in coalescence units (see below). This
tree is commonly called the \emph{species tree} on the taxon
set $T$.

It is a fundamental problem in phylogenetics to determine
the topology of this tree. Molecular methodes, however,
usually produce evolutionary trees for single loci within
the genome of the relevant species, and these \emph{gene
trees} will usually differ topologically from the species
tree (Degnan and Rosenberg \cite{Degnan_Discordance}). This
has several biological reasons, like horizontal gene
transfer, gene duplication/loss or incomplete lineage
sorting. The latter describes the phenomenon that to gene
lineages diverge long before the population actually
splits into two separate species, and in particular gene
lineages may separate in a different order than the species
do.

The multispecies coalescent model describes the probability
for each rooted tree topology to occur as the topology of a
gene tree, under the assumption that incomplete lineage
sorting is the only reason for gene tree discordance.

Notation: We adopt the common practice to denote taxa
(species) by lower case letters, while we denote genes (or
more general: loci) in their genome by capital letters.
E.g.\ if we consider three taxa $a,b,c\in T$, the letters
$A,B,C$ will denote a particular locus sampled from the
respeceive taxa. (By abuse of notation, we will identify the
leaf sets of both species \emph{and} gene trees with $T$).

Considering only a single internal branch of the species
tree, and two gene lineages within this branch going
backwards in time, we find that the probability that the two
lineages coalesce to a single lineage within time $\tau$ is
given by $$ P = 1 - \exp( - \tau ) $$ where $\tau$ is in
\emph{coalescence units} (that is number of generations
represented by the internal branch divided by the total
number of allele of the locus of interest, present in the
population). In particular, if the branch on the species
tree has length $d$, the probability that two lineages
coalesce within this branch is $1-\exp(-d)$.

From this, Nei
\cite{Nei_87} derives the following probabilities under the
multispecies coalescent model for a three taxon tree. Denote
the taxa by $a,b,c$, the corresponding loci by $A,B,C$, and
assume that the species tree has the topology $((a,b),c)$
with internal edge length $d>0$. Then the probability that a
gene tree sampled from these taxa has the topology
$((A,B),C)$ is equal to $$ 1 - \frac{2}{3}\exp(-d),$$ while
the other two topologies are observed with probability
$\frac{1}{3}\exp(-d)$.

In \cite{Degnan_Distributions} Degnan and Salter show how to
calculate the probabilities of a gene tree in a fixed
species tree on an arbitrary taxon set. It turns out that
these probabilities are polynomials in the unknonws
$\exp(-d_{E})$, where $d_{E}$ denotes the length of the edge
$E$ on the species tree. In particular, the multispecies
coalescent model is an algebraic statistical model.

\subsection{Quartet distributions}

In the following we consider four taxon species trees on the
taxon set $T=\lbrace a,b,c,d \rbrace$, and we use Newick
notation to specify (rooted) species trees. For instance
$S_{1} = (((a,b):x,c):y,d)$ denotes the
caterpillar tree with edge lengths $x$ and $y$ in coalescent
units. Moreover, unrooted quartets are denoted in the format
$(AB,CD)$, meaning the quartet gene tree which has cherries
$(AB)$ and $(CD)$.

Up to permutation of leaf labels there is only one
additional tree topology on four taxa, namely the balanced tree
$S_{2} = ((a,b):x,(c,d):y)$. For both of these two
species tree topologies it is not hard to calculate the
probabilities of the tree possible unrooted quartet gene
trees on $T$, and one obtains

\begin{lemma}[Allman et al.\ \cite{Allman_Unrooted}]
	For both species tree topologies $S =
	S_{1}$ and $S =
	S_{2}$ the probabilities of unrooted
	quartet gene trees on $T$ have the same form and are given by the formulas
	\begin{align}
		P( AB,CD ) &= 1-\frac{2}{3}\exp(-x-y) >
		\frac{1}{3},\\
		P( AC,BD ) &= \frac{1}{3}\exp(-x-y) < \frac{1}{3},\\
		P( AD,BC ) &= \frac{1}{3}\exp(-x-y) <
		\frac{1}{3}.
		\label{eqnQuartetProb}
	\end{align}
	\label{lemQuartetProb}
\end{lemma}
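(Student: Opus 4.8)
The plan is to reduce the unrooted quartet probabilities to the rooted three-taxon calculation already quoted from Nei. The key observation is that an unrooted quartet tree on $\{A,B,C,D\}$ is determined by which pair forms a cherry, and that for both $S_1$ and $S_2$ the internal edges $x$ and $y$ together control a single coalescent event that separates $\{a,b\}$ from $\{c,d\}$. So the first step is to trace, backwards in time, the lineages sampled from the four taxa and identify the event whose outcome fixes the unrooted quartet topology.

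For the balanced tree $S_2 = ((a,b){:}x,(c,d){:}y)$ I would argue as follows. Going backwards in time, lineages $A$ and $B$ travel up the edge of length $x$, and lineages $C$ and $D$ up the edge of length $y$, before any of them can meet a lineage from the other side of the root. The unrooted quartet topology comes out as $(AB,CD)$ precisely when $A,B$ coalesce with each other (equivalently $C,D$ coalesce with each other) before crossing into the common ancestral population above the root; otherwise the first coalescence above the root mixes the two sides and, by exchangeability of the three remaining join orders in the ancestral population, each of the three unrooted topologies becomes equally likely. Quantitatively, the probability that neither the $\{A,B\}$ pair nor the $\{C,D\}$ pair has coalesced by the time all four lineages enter the ancestral population is $\exp(-x)\exp(-y)=\exp(-x-y)$; in that event each unrooted topology has probability $\tfrac13$, contributing $\tfrac13\exp(-x-y)$ to each. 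In the complementary event of probability $1-\exp(-x-y)$ the topology is forced to be $(AB,CD)$. Summing gives $P(AB,CD)=1-\exp(-x-y)+\tfrac13\exp(-x-y)=1-\tfrac23\exp(-x-y)$ and $P(AC,BD)=P(AD,BC)=\tfrac13\exp(-x-y)$, which are exactly the claimed formulas.

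For the caterpillar tree $S_1=(((a,b){:}x,c){:}y,d)$ the computation is morally the same but the bookkeeping differs, and this is the step I expect to be the main obstacle. Here the two internal edges are nested rather than parallel: lineages $A,B$ may coalesce on the edge of length $x$, and then the resulting lineage (or, if they failed to coalesce, both of them) travels with $C$ up the edge of length $y$. I would condition on whether $A,B$ coalesce below the $c$-split, and in the surviving cases carefully enumerate the possible coalescences among the lineages present on the $y$-edge, tracking which joins produce which unrooted quartet once $D$ is adjoined above. The delicate point is verifying that the contributions recombine to the \emph{same} exponential $\exp(-x-y)$ and the same symmetric split between the two minority topologies $P(AC,BD)$ and $P(AD,BC)$; the symmetry between $C$ and $D$ is not manifest from the tree shape and must be checked, rather than assumed, from the coalescent probabilities.

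Having established both cases, the last step is cosmetic: since $x,y>0$ we have $0<\exp(-x-y)<1$, which immediately yields the stated strict inequalities $P(AB,CD)>\tfrac13$ and $P(AC,BD),P(AD,BC)<\tfrac13$, completing the proof. An alternative to the case-by-case coalescent bookkeeping would be to invoke the general Degnan--Salter polynomial formulas cited earlier and specialize them to four taxa, then marginalize over the root location to pass from rooted to unrooted trees; but the direct coalescent argument above is more self-contained and makes transparent why the two topologies yield identical formulas.
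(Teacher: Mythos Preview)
The paper does not actually prove this lemma; it is quoted from Allman, Degnan and Rhodes and introduced only with the remark that the probabilities are ``not hard to calculate''. So there is no paper proof to match, and the relevant question is whether your argument stands on its own.

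Your treatment of the balanced tree $S_2$ is correct. One small wording issue: the sentence ``comes out as $(AB,CD)$ \emph{precisely when} $A,B$ coalesce \ldots'' overstates things (the topology can also be $(AB,CD)$ after a failure to coalesce below the root), but your actual computation---condition on the event, of probability $e^{-x}e^{-y}$, that \emph{neither} pair coalesces below the root, and use exchangeability of the four lineages above the root to get $1/3$ for each topology---is exactly right.

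For the caterpillar $S_1$ your outline is sound, but if you carry it out you will \emph{not} obtain $\exp(-x-y)$; you will obtain $\exp(-x)$. A one-line way to see this, shorter than the case enumeration you propose: conditional on $A$ and $B$ failing to coalesce on the $x$-edge (probability $e^{-x}$), the three lineages $A,B,C$ entering the $y$-edge are fully exchangeable for the remainder of the coalescent, regardless of what happens on the $y$-edge or above the root. Hence each unrooted topology has conditional probability $1/3$, giving $P(AC,BD)=P(AD,BC)=\tfrac13 e^{-x}$ and $P(AB,CD)=1-\tfrac23 e^{-x}$. (Incidentally, this also answers your worry about the $C\leftrightarrow D$ symmetry: you never need it, since the manifest $A\leftrightarrow B$ symmetry already forces the two minority probabilities to agree.)

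The discrepancy with the stated $\exp(-x-y)$ is not a flaw in your method but an imprecision in the lemma as written: for $S_1=(((a,b){:}x,c){:}y,d)$ the $y$-edge becomes a pendant edge upon unrooting, so the internal branch of the unrooted species quartet has length $x$, not $x+y$. The correct unified statement---and what Allman et al.\ actually prove---is that the exponent equals the internal branch length of the \emph{unrooted} four-taxon species tree, which is $x+y$ for $S_2$ but $x$ for $S_1$. This is also the quantity $d(ab,cd)$ used in the rest of the paper, so nothing downstream is affected.
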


In particular, the gene quartet distribution determines the
\emph{unrooted} species tree topology, but not the position
of the root (Allman et al.).
The sum of the internal edge lengths of the rooted
species tree (resp. the length of the internal edge of the
unrooted species tree) is given by the formula

\begin{equation}
	d(ab,cd) := d = x + y = -\log( \frac{3}{2}(1-P(AB,CD))).
	\label{eqn:QuartetEdgeLength}
\end{equation}

Returning to the study of the multispecies coalescent on
species trees on arbitrary taxon sets, we deduce from the
above that the quartets displayed on the species tree
$S$ on the taxon set $T$ are exactly those which
appear with a probability bigger than $\frac{1}{3}$ for any
sampled locus. Hence, the distributions of quartets
determine uniquely the quartet subtrees which are displayed
by the true species tree $S$, and hence determine
the unrooted topology of $S$ (see Allman et al.\
\cite{Allman_Unrooted}).

In the following two sections we describe a neighbor joining
algorithm which makes this theoretical insight applicable in
practice, meaning that it yields a method to estimate
unrooted species trees from (empirical) distributions of
quartet gene trees, which is statistically consistent under
the multispecies coalescent model. Statistical consistency
of this algorithm will follow from the `cherry picking
theorem' below.

\section{A cherry picking theorem}\label{sect:CherryPicking}

Here we give a precise criterion, using theoretical
distributions of quartet gene trees under the multispecies
coalescent, to determine which pairs of taxa on the species
tree are cherries. This criterion can as well be applied to
estimated quartet distributions and thus enables us to
recursively construct a species tree estimate from observed
gene quartet frequencies in Section \ref{sect:Algorithms}.

\subsection{Depth of a pair and cherry picking}

As always, we consider a fixed species tree $S$ on
the taxon set $T$, and we denote by $P(IJ,KL) :=
P_{S}(IJ,KL)$ the probability that a gene tree on
$T$ displays the gene quartet tree $(IJ,KL)$.

Recall that by lower case letters $i,j,l,k$ we denote the
species from which the genes $I,J,K,L$ are sampled.
Regardless of whether $S$ contains the (species-)
quartet $(ij,kl)$, we can attach to it the following
numbers:

\begin{definition}[Weight of a quartet]\label{defn:Weight}
	The \emph{weight} of
	the quartet $(ij,kl)$ is defined as
	\begin{equation}
		w(ij,kl) = -\log( \frac{3}{2} ( P(IK,JL) + P(IL,JK) )).
		\label{eqn:Weight}
	\end{equation} If the taxa $i,j,k,l$ are not
	pairwise distinct, then we set $w(ij,kl) = 0$.
\end{definition}

Of course, if $i,j,k,l$ are pairwise distinct we may
equivalently write \begin{equation*} w(ij,kl) = -\log(
	\frac{3}{2} ( 1-P(IJ,KL) ) ).  \end{equation*}

\begin{lemma} If the quartet $(ij,kl)$ is displayed on the
	species tree $S$, then the weight
	$w(ij,kl)$ is precisely the length of the interior
	branch of the quartet $(ij,kl)$. Moreover, the other
	two weights, $w(ik,jl)$ and $w(il,jk)$, are less
	than zero.
\label{lem:WeightDistance} \end{lemma}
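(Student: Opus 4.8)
The plan is to reduce the computation to the induced four-taxon species tree and then invoke the formulas of Lemma~\ref{lemQuartetProb}. First I would observe that the marginal distribution of the quartet gene tree on the four leaves $i,j,k,l$ --- that is, the numbers $P(IJ,KL)$, $P(IK,JL)$, $P(IL,JK)$ entering Definition~\ref{defn:Weight} --- depends only on the species tree $S$ restricted to $\lbrace i,j,k,l\rbrace$. This is the consistency (marginalization) property of the multispecies coalescent: running the coalescent on the whole of $S$ and then discarding all lineages outside $\lbrace i,j,k,l\rbrace$ produces the same induced quartet distribution as running the coalescent directly on the four-taxon species tree obtained by restricting $S$ to these leaves and suppressing the resulting degree-two vertices. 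Under the hypothesis that the quartet $(ij,kl)$ is displayed on $S$, this restricted tree has unrooted topology $(ij,kl)$, and its single internal edge --- whose length I will call $d$, the ``length of the interior branch of the quartet'' --- is the sum of the coalescence-unit lengths of those edges of $S$ separating $\lbrace i,j\rbrace$ from $\lbrace k,l\rbrace$.

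Next I would apply Lemma~\ref{lemQuartetProb} to this induced four-taxon tree, with $d = x+y$ in the notation of equation \eqref{eqn:QuartetEdgeLength}. It gives $P(IJ,KL) = 1-\tfrac{2}{3}\exp(-d)$ for the displayed quartet and $P(IK,JL) = P(IL,JK) = \tfrac{1}{3}\exp(-d)$ for the other two. Substituting the equivalent form $w(ij,kl) = -\log(\tfrac{3}{2}(1-P(IJ,KL)))$ noted just after Definition~\ref{defn:Weight}, the factors $\tfrac{3}{2}$ and $\tfrac{2}{3}$ cancel and the logarithm of $\exp(-d)$ collapses, yielding $w(ij,kl)=d$ exactly, as claimed.

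For the remaining two weights I would use the same substitution together with the strict inequality $P<\tfrac{1}{3}$ for the non-displayed quartets. Writing $w(ik,jl) = -\log(\tfrac{3}{2}(1-P(IK,JL)))$ and using $P(IK,JL)<\tfrac{1}{3}$ gives $\tfrac{3}{2}(1-P(IK,JL))>1$, so its logarithm is positive and hence $w(ik,jl)<0$; the argument for $w(il,jk)$ is identical.

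I expect the only real obstacle to be making the first paragraph rigorous: one must justify the marginalization property of the coalescent --- so that the full-tree probabilities $P(\cdot)$ genuinely agree with those of the induced four-taxon tree --- and correctly identify the induced internal edge length with ``the interior branch of the quartet.'' Once this reduction is in place, the first claim is a one-line cancellation and the sign claims follow immediately from the $\tfrac{1}{3}$-threshold in Lemma~\ref{lemQuartetProb}.
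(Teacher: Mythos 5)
Your proposal is correct and takes essentially the same route as the paper's own proof: both reduce to the four-taxon formulas of Lemma~\ref{lemQuartetProb} via equation \eqref{eqn:QuartetEdgeLength}, obtain $w(ij,kl)=d$ for the displayed quartet by the $\tfrac{3}{2}\cdot\tfrac{2}{3}$ cancellation, and get negativity of the other two weights from the fact that their log-arguments exceed $1$ (equivalently, from the $\tfrac{1}{3}$ threshold). The only difference is that you spell out the coalescent marginalization step reducing to the induced four-taxon species tree, which the paper leaves implicit as part of the background from Section~\ref{sect:MSCM}.
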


\begin{proof}
	The proof is immediate using equation
	\eqref{eqn:QuartetEdgeLength}: 
	If the quartet tree $(ij,kl)$ is displayed on the
	species tree $S$, then
\begin{multline}
w(ij,kl) = -\log(\frac{3}{2}( P(IK,JL) + P(IL,JK) )) =\\ = -\log( \frac{3}{2} \frac{2}{3}\exp(-d(ij,kl)) ) = d(ij,kl).
\end{multline}
Otherwise, if $(ij,kl)$ is not displayed on $S$, then one of the other two quartet trees on $\lbrace i,j,k,l \rbrace$ is displayed, say $(ik,jl)$, and we have
\begin{multline}
w(ij,kl) = -\log(\frac{3}{2} ( P(IK,JL) +
P(IL,JK) )) = \\
= -\log( \frac{3}{2} ( 1- \frac{1}{3} \exp(-d(ik,jl))) ) < -\log(
\frac{3}{2}\frac{2}{3} ) = 0,
\end{multline}
since $d(ik,jl) > 0$.
\end{proof}

This little lemma motivates the following definition and a
cherry picking theorem which is formulated and proved in
close analogy to the `classical' one (see Saito and Nei
\cite{Nei_Neighbor} for the original publication, or Pachter
and Sturmfels \cite{Pachter_Sturmfels} for a presentation
analogous to ours).

\begin{definition}[Depth of a pair]\label{defn:Depth}
	For any pair of taxa $i,j\in T$ we define the
	\emph{depth} of $(i,j)$ to be the number
	\begin{equation}
		f(i,j) = \displaystyle\sum_{k,l\in T}
		w(ij,kl).
		\label{eqnDepthPair}
	\end{equation}
	(Recall that if $\lbrace i,j\rbrace\cap\lbrace
	k,l\rbrace \neq \emptyset$ then $w(ij,kl)=0$.)
\end{definition}

\begin{theorem}[Cherry Picking]\label{thm:CherryPicking}
	If a pair of taxa $i,j\in T$
	has maximal depth $f(i,j)$, then $(ij)$ is a cherry on the
	species tree $S$.
\end{theorem}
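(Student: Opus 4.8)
The plan is to argue by contradiction: assuming a pair $(a,b)$ of maximal depth that is \emph{not} a cherry on $S$, I would exhibit a genuine cherry (or a pair sharing a taxon with $a$) whose depth is strictly larger, contradicting maximality. The guiding picture comes from Lemma~\ref{lem:WeightDistance}: the weight $w(ij,kl)$ is positive and equals the internal quartet edge length exactly when $(ij,kl)$ is the quartet displayed by $S$, and is negative otherwise. Thus $f(i,j)$ accumulates a positive internal-edge contribution from every four-subset in which $i$ and $j$ stand together, and a negative penalty from every four-subset that splits them. A true cherry $(p,q)$ is separated from all remaining taxa by a single edge, so every quartet $(pq,kl)$ is displayed and $f(p,q)=\sum_{\{k,l\}} w(pq,kl)$ is a sum of \emph{positive} terms only, whereas a non-cherry pair necessarily incurs some negative penalties. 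This is precisely why one expects cherries to maximize depth.

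To make the comparison quantitative I would introduce the internal-edge pseudometric $m$ on $S$, where $m(i,j)$ is the sum of the lengths (in coalescent units) of the internal edges on the path from $i$ to $j$. By the four-point formula the additive quantity $g(ij,kl)=\tfrac12\bigl(m(i,k)+m(j,l)-m(i,j)-m(k,l)\bigr)$ equals $+d(ij,kl)$ when $(ij,kl)$ is displayed and $-d$ on the two split topologies, where $d$ is the displayed internal length of \eqref{eqn:QuartetEdgeLength}. Comparing with Definition~\ref{defn:Weight}, one has $w=g$ on displayed quartets, while on split quartets $w(ij,kl)=-\log\!\bigl(\tfrac32-\tfrac12\exp(-d)\bigr)$, which by the elementary inequality $\exp(d)+\tfrac12\exp(-d)\ge\tfrac32$ (strict for $d>0$) satisfies $w(ij,kl)>-d=g(ij,kl)$. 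Hence $w\ge g$ everywhere, with equality precisely on displayed quartets, so the depths agree at true cherries, where $f(p,q)$ coincides with the classical neighbor-joining depth of the tree metric $m$.

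The main obstacle is that $w$ is genuinely \emph{not additive}: on split quartets the penalty is strictly milder than the additive value $-d$, so the depth of a non-cherry pair is \emph{inflated} relative to the additive prediction, and one cannot simply quote the classical cherry-picking theorem for $m$ (Saitou--Nei \cite{Nei_Neighbor}, Pachter--Sturmfels \cite{Pachter_Sturmfels}). I would therefore run the comparison locally rather than globally. Taking the maximal non-cherry pair $(a,b)$, let $c$ be a leaf lying strictly closer to $a$ than $b$ along the path between them in $S$, and organize $f(a,b)-f(a,c)$ into the boundary terms on the common four-subsets $\{a,b,c,y\}$ (where $w(ab,cy)$ and $w(ac,by)$ compare directly and carry the favorable sign) plus the bulk cross-terms $w(ab,xy)$ versus $w(ac,xy)$, exactly as in the classical computation. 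The crux is controlling those cross-terms despite non-additivity: here the inequality $w\ge g$ is fed in so that the positive internal-edge lengths gained by moving the partner closer dominate the now-milder split penalties, forcing $f(a,c)>f(a,b)$ and contradicting maximality. Strictness throughout rests on the positivity of all internal edge lengths ($d>0$), which is ultimately what guarantees that a pair of maximal depth can only be a cherry.
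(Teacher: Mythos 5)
Your preliminary observations are correct (the four-point identity for the internal-edge metric $m$, and the inequality $w\ge g$ with equality exactly on displayed quartets), but the core comparison step of your plan fails: the intermediate claim that for a non-cherry pair $(a,b)$ some leaf $c$ attaching to the $a$--$b$ path closer to $a$ satisfies $f(a,c)>f(a,b)$ is simply false, so no bookkeeping of boundary and bulk terms can establish it. Concretely, take the unrooted species tree with internal edge $u_1u_2$ of tiny length $\epsilon$, where $u_1$ carries the leaf $a$ and a large subtree $T_1$ with $N$ leaves (unit-length internal edges) attached to $u_1$ by an edge of huge length $M$, and where $u_2$ carries the two leaves $b$ and $d$. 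The pair $(a,b)$ is not a cherry. Every quartet $(ab,xy)$ with $x,y\in T_1$ is displayed with internal length at least $M$, so $f(a,b)\ge \binom{N}{2}M-N\log\frac{3}{2}$. The leaves attaching closer to $a$ are exactly those of $T_1$, and for $c\in T_1$ no displayed quartet $(ac,xy)$ ever contains the long edge, since that edge lies on the path from $a$ to $c$; hence $f(a,c)=O(N^{2}\log N)$. Taking $M=N$ large gives $f(a,c)<f(a,b)$ for every admissible $c$; moreover a direct check shows $f(a,d)=f(a,b)$ exactly, so \emph{no} pair containing $a$ strictly beats $(a,b)$. The pair that does beat it is the cherry $(b,d)$ on the far side of the path. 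Note also that this failure is not caused by the non-additivity you flag as the main obstacle: the same example defeats the analogous comparison for the purely additive depth built from $m$. The classical argument you appeal to does not compare $(a,b)$ with $(a,c)$ either; it compares $(a,b)$ with a cherry inside a subtree hanging off the $a$--$b$ path.

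This is precisely what the paper's proof is engineered to handle. It first swaps the roles of $i$ and $j$ so that the smallest hanging subtree $T_1$ is adjacent to $i$; if $T_1$ is a single leaf $i'$, the comparison pair is the cherry $(i,i')$ (this is $(b,d)$ in the example above), and otherwise it is a cherry $(p,q)$ \emph{inside} $T_1$ --- in general a pair containing neither $i$ nor $j$. The second ingredient you are missing is an upper bound on the incumbent depth: your inequality $w\ge g$ bounds weights only from below, which helps the candidate pair but says nothing against $f(a,b)$. The paper instead uses $w\le v$, where $v$ truncates non-displayed weights to $0$ (legitimate because non-displayed weights are negative by Lemma \ref{lem:WeightDistance}), so that $f\le g:=\sum v$ with equality at cherries, and then proves the strict inequality $g(p,q)>g(i,j)$ by decomposing the difference into five sums indexed by where $k,l$ sit relative to the path and to $T_1$. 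To salvage your write-up, replace the local move ``pull the partner closer to $a$'' by the paper's choice of comparison cherry and run the estimate with the truncated depths $g$ rather than with the additive minorant.
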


As mentioned above, the proof of this theorem is parallel to
the proof of the classical cherry picking result as
presented in \cite{Pachter_Sturmfels}. We state
it here for sake of completeness of our exposition.

\begin{proof}
	As a first step, we introduce the auxiliary values
\begin{equation}
	v(ij,kl) := \left\lbrace \begin{array}{cl}
		w(ij,kl) & \text{if } (ij,kl) \text{ is
		displayed on }S, \\
		0 & \text{else,}
	\end{array}\right.
	\label{eqn:defv}
\end{equation}
for every four taxon subset $\lbrace i,j,k,l \rbrace \subset
T$, as well as
\begin{equation}
	g(i,j) := \displaystyle\sum_{k,l}v(ij,kl)
	\label{eqn:defg}
\end{equation}
for every pair of taxa. Obviously, if $(ij)$ is a cherry on
$S$, then $w(ij,kl) = v(ij,kl)$ for all $k,l\in
T$, and hence also $g(i,j) = f(i,j)$. In general, for any
pair of taxa $(i,j)$ one has $g(i,j) \geq f(i,j)$ by Lemma
\ref{lem:WeightDistance}.

We will now assume that the pair $(i,j)$ does \emph{not}
form a cherry on the species tree $S$ and prove
that in this case there exists a cherry $(pq)$ on
$S$ such that the following inequalities hold:
\begin{equation}
	f(p,q) = g(p,q) > g(i,j) \geq f(i,j).
	\label{eqn:Inequal}
\end{equation}
From this we deduce the claim of the theorem: If $(ij)$ is
not a cherry on $S$, then it does not have maximal
depth.

\begin{proof}[Proof of the claim] We have to find a cherry
	$(pq)$ on $S$ such that indeed $g(p,q) >
	g(i,j)$ holds.
As we assume $i$ and $j$ not to form a cherry on $S$, the
unique path connecting $i$ and $j$ crosses at least two
interior nodes (symbolized as black dots in figure \ref{Tree1}. We
denote the number of internal nodes on this path by $r\geq
2$. To each $s=1,\dotsc,r$ a rooted binary tree $T_{j}$ is
attached (see Figure \ref{Tree1}).
\begin{figure}
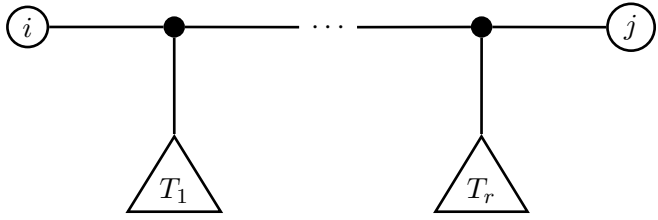

	\caption{The nodes $i$ and $j$ do not form a cherry.}
	\begin{center}
$
\psmatrix[mnode=circle, linewidth=1pt, rowsep=1cm, colsep=1cm]
i & [fillstyle=solid,fillcolor=black] & [linestyle=none]\cdots & [fillstyle=solid,fillcolor=black] & j \\
& [mnode=tri]T_{1} & & [mnode=tri]T_{r} 
\ncline{1,1}{1,2}
\ncline{1,2}{1,3}
\ncline{1,3}{1,4}
\ncline{1,4}{1,5}
\ncline{1,2}{2,2}
\ncline{1,4}{2,4}
\endpsmatrix
$
\end{center}
	\label{Tree1}
\end{figure}
By interchanging $i$ and $j$, if necessary, we may assume
that the number of taxa in $T_{1}$ is less or equal than the
number of taxa in $T_{r}$.

First we consider the case that the subtree $T_{1}$ consists
of a single leaf $i'$. In this case the pair $(i,i')$ is a
cherry on $S$, and it is easy to see that indeed
$g(i,i')>g(i,j)$.

Second, if $T_{1}$ has more than one leaf, let choose a
cherry $(pq)$ in $T_{1}$ (every rooted binary tree with more
than one leaf has at least one cherry!).
Now we decompose the difference $g(p,q)-g(i,j)$ into six
sums which will be treated separately:
\begin{equation}
	g(p,q) - g(i,j) = S_{1} + \dotsc + S_{5},
\end{equation} where
\begin{align*}
	S_{1} = & \displaystyle\sum_{k,l\in T_{m}, m=2,\dotsc,r}
	(v(pq,kl) - v(ij,kl)), & \\
	S_{2} = & \displaystyle\sum_{k\in T_{m} \setminus \lbrace
	p,q \rbrace, l\in T_{m'} \setminus \lbrace p,q
	\rbrace,
	m,m'=1,\dotsc,r, m\neq m'} (v(pq,kl) - v(ij,kl)), \\
	S_{3} = & \displaystyle\sum_{k,l\in T_{1} \setminus \lbrace
	p,q\rbrace} (v(pq,kl) - v(ij,kl)), \\
	S_{4} = & \displaystyle\sum_{k\in T_{m}, m=2,\dotsc,r}
	(v(pq,ki) + v(pq,kj) - v(ij,kq) - v(ij,kp)), \\
	S_{5} = & \displaystyle\sum_{k\in T_{1}\setminus\lbrace
	p,q\rbrace}
	(v(pq,ki) + v(pq,kj) - v(ij,kq) - v(ij,kp))\quad + \\
	& + (-v(pq,ij) + v(ij,pq)).
	\label{eqn:cherry_noncherry}
\end{align*}
Inspection of the tree in Figure \ref{Tree1} immediately
shows that each summand in $S_{1}$ is
positive, and more precisely, greater or equal to
$d(ij,pq)$. Hence $S_{1} > \binom{\lvert T_{r}\rvert}{2}d(ij,pq)$.
In $S_{2}$, the expressions $v(ij,kl)$ all vanish,
hence this sum is positive.
\begin{figure}
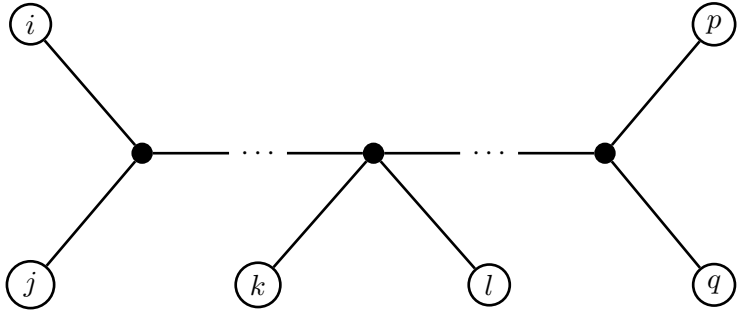

	\caption{Part of the tree: leaves $i,j$ and subtree
	$T_{1}$ with cherry $(pq)$ and leaves $k,l$.}
	\begin{center}
$
\psmatrix[mnode=circle, linewidth=1pt, rowsep=1cm, colsep=1cm]
i & & & & & & p \\
& [fillstyle=solid,fillcolor=black] & [linestyle=none]\cdots &
[fillstyle=solid,fillcolor=black] &
[linestyle=none]\cdots &
[fillstyle=solid,fillcolor=black] \\
j & & k & & l & & q 
\ncline{1,1}{2,2}
\ncline{3,1}{2,2}
\ncline{2,2}{2,3}
\ncline{2,3}{2,4}
\ncline{2,4}{2,5}
\ncline{2,5}{2,6}
\ncline{2,6}{1,7}
\ncline{2,6}{3,7}
\ncline{3,3}{2,4}
\ncline{3,5}{2,4}
\endpsmatrix
$
\end{center}
	\label{Tree2}
\end{figure}
The third sum $S_{3}$ might be negative - but not too negative: The
situation is illustrated by Figure \ref{Tree2} (which arises
from the tree in Figure \ref{Tree1} by deleting the
irrelevant subtrees $T_{2},\dotsc,T_{r}$), whose
inspection shows that each for each summand $v(pq,kl) -
v(ij,kl)$ we have
\begin{equation*}
	v(pq,kl) - v(ij,kl) = d(pq,kl) - d(ij,kl) \geq
	-d(ij,pq).
\end{equation*}
Thus we see $S_{3} \geq -\binom{\lvert
T_{1}\rvert - 2}{2}d(ij,pq) > -\binom{\lvert T_{r}\rvert}{2}d(ij,pq)$,
whence $S_{1} + S_{2} + S_{3}$ is positive.

Now we treat the frouth sum: If $k$ is a node in one of
the subtrees $T_{2},\dotsc,T_{r}$, then $v(ij,kq)=v(ij,kp) =
0$, so $S_{4}$ is non-negative. More precisely we have
$S_{4} \geq 2\binom{\lvert T_{r} \rvert}{2}v(pq,ij)$.
On the other hand, the sum $S_{5}$ is greater or equal
$-2\binom{\lvert T_{1}\rvert - 2}{2}v(ij,pq)\geq
-2\binom{\lvert T_{r} \rvert}{2}v(pq,ij)$, whence also
$S_{4} + S_{5}$ is positive.
In total we have found that in any case $g(p,q)>g(i,j)$,
which proves our claim.
\end{proof}

As explained above, this suffices to prove the cherry
picking theorem.
\end{proof}

\subsection{Statistical consistency of cherry picking}

In practical applications one will not know the precise
probability of each quartet gene tree $(AB,CD)$. Hence one
has to use e.g.\ relative frequencies as estimates. Let us
denote by $r(AB,CD)$ the (experimentally determined)
relative frequency of the gene quartet $(AB,CD)$. In analogy
to Definitions \ref{defn:Weight} and \ref{defn:Depth} we
make the following

\begin{definition}
	The \emph{empirical weight} of the
	quartet $(ij,kl)$ of taxa is defined as
	\begin{equation} \tilde{w}(ij,kl) = -\log( \frac{3}{2}
		(r(IK,JL) + r(IL,JK)) ),
		\label{eqn:empWeight}
	\end{equation}
	Moreover, the \emph{empirical depth} of a pair
	$(i,j)$ of taxa is defined as
\begin{equation}
	\tilde{f}(i,j) = \displaystyle\sum_{k,l\in
	T}\tilde{w}(ij,kl).
	\label{eqn:empDepth}
\end{equation}
\end{definition}

Under the multispecies coalescent model, the relative
frequency $r(IJ,KL)$ of a quartet gene tree $(IJ,KL)$ is a
statistically consistent estimator for its probability
$P(IJ,KL)$. Hence the \emph{empirical} depth
$\tilde{f}(i,j)$ of a pair of taxa $i,j$ is a statistically
consistent estimator for $f(i,j)$. In particular this means:
\emph{inferring a pair $(i,j)$ of taxa, where $\tilde{f}$ is
maximal, as a cherry on the species tree is statistically
consistent.} More precisely, we have

\begin{corollary}\label{cor:consistent}
	Let $N$ be the number of loci sampled from each
	taxon in $T$ and denote by $C_{N}$ the set of pairs
	of taxa at which $\tilde{f}$ attains its maximum.
	If the number of genes $N$ tends to infinity, then
	the probability that $C_{N}$ contains a pair which
	is not a cherry on the true species tree approaches
	zero.
\end{corollary}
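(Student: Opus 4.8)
The plan is to derive the corollary from the Cherry Picking Theorem (Theorem~\ref{thm:CherryPicking}) together with the consistency of the empirical depth as an estimator of the true depth, packaged into a standard ``gap'' argument. Throughout one uses that $T$ is finite, so all sums are finite and there are only finitely many pairs and four-taxon subsets to control.

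First I would make precise that, for each fixed pair $(i,j)$, the empirical depth $\tilde f(i,j)$ converges in probability to $f(i,j)$ as $N\to\infty$. For a fixed four-taxon subset the topology of the sampled gene quartet is, across the $N$ loci, an i.i.d.\ sequence taking three values with probabilities $P(\cdot)$; hence each relative frequency $r(IK,JL)$ is an empirical mean of indicator variables and, by the weak law of large numbers, $r(IK,JL)\to P(IK,JL)$ in probability. Since the true argument $P(IK,JL)+P(IL,JK)$ of the logarithm in~\eqref{eqn:Weight} is strictly positive, the map defining the weight is continuous there, so by the continuous mapping theorem $\tilde w(ij,kl)\to w(ij,kl)$ in probability; the degenerate terms with $\{i,j\}\cap\{k,l\}\neq\emptyset$ are identically $0$ on both sides. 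As $\tilde f(i,j)$ is a finite sum of such weights, $\tilde f(i,j)\to f(i,j)$ in probability, and taking the maximum over the finitely many pairs gives $\max_{(i,j)}\lvert \tilde f(i,j)-f(i,j)\rvert\to 0$ in probability.

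Second I would extract a deterministic gap from the theorem. Let $M=\max_{(i,j)} f(i,j)$. By Theorem~\ref{thm:CherryPicking} every pair attaining $M$ is a cherry; contrapositively, every non-cherry pair satisfies $f<M$ strictly. Since there are finitely many non-cherry pairs (and if there are none the claim is trivial), the number $\delta := M-\max\{f(i,j): (ij)\text{ not a cherry}\}$ is strictly positive.

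Finally I would combine the two facts. On the event $E_N=\{\max_{(i,j)}\lvert\tilde f(i,j)-f(i,j)\rvert<\delta/2\}$, any non-cherry pair has $\tilde f<M-\delta/2$ while any cherry attaining $M$ has $\tilde f>M-\delta/2$; hence no non-cherry pair can lie in $C_{N}$. Therefore $\Pr(C_{N}\text{ contains a non-cherry})\le \Pr(E_N^{c})\to 0$ by the first step, which is exactly the assertion. The one point that requires care---and the only genuine obstacle---is the definedness and continuity of $\tilde w$: for finite $N$ a relative-frequency sum could vanish and make $\tilde w$ infinite, but since the true argument of the logarithm is bounded away from $0$, such events are absorbed into $E_N^{c}$ and have vanishing probability, so the continuous mapping step goes through without difficulty.
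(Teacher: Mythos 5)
Your proposal is correct and follows essentially the same route as the paper: a finite ``gap'' between the maximal depth and the depths of non-maximizing pairs, combined with convergence in probability of the relative frequencies (hence of $\tilde f$ to $f$), and an appeal to Theorem~\ref{thm:CherryPicking} to identify the maximizers as cherries. Your write-up is in fact somewhat more careful than the paper's, since you explicitly handle the event that some empirical weight $\tilde w$ is infinite (a zero relative-frequency sum), a degenerate case the paper's brief continuity argument passes over in silence.
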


\begin{proof} As there are only finitely many taxa in $T$,
	there is a strictly positive difference between the
	maximum value of $f$ on $T^{2}$ and its second
	biggest value. Since moreover $f$ and $\tilde{f}$
	are continuous, there exists an $\delta > 0$ such
	that whenever $\lvert r(IJ,KL) - P(IJ,KL) \rvert <
	\delta$, $\tilde{f}(p,q)$ is maximal if and only if
	$f(p,q)$ is maximal, and the claim follows from
	Theorem \ref{thm:CherryPicking}. But the probability
	that $\lvert r(IJ,KL) - P(IJ,KL) \rvert < \delta$
	approaches 1 as $N$ grows, for any choice of
	$\delta$.
\end{proof}

\subsection{Minimal probability for incomplete
lineage sorting}

Recall that classical neighbor joining is a greedy algorithm
which in each `cherry picking' step declares a pair of taxa
to be neighbors if this minimizes the sum of branch lengths
in the refined tree resulting from this step (Saito and Nei
\cite{Nei_Neighbor}). In other words, classical cherry
picking and neighbor joining is guided by the principle of
\emph{minimal evolution}. It turns out that our cherry
picking result in Theorem \ref{thm:CherryPicking} can be
interpreted in a similar fashion. 

Let us make the following ad-hoc definition.
\begin{definition}
	Let $X_{1},\dotsc,X_{n}$ be independent random
	variables with values in $\lbrace 0,1\rbrace$, and
	let $p_{i}$ be the probability of $\lbrace X_{i} =
	1\rbrace$ for each $i=1,\dotsc,n$.
	We call the geometric mean
	\begin{equation}
		\label{eqn:averageProb}
		\bar{p} = \sqrt[n]{p_{1}\dotsb p_{n}}
	\end{equation}
	the \emph{average probability} of the random
	experiments $X_{1},\dotsc,X_{n}$ giving the result
	$1$.
\end{definition}

Let us now consider what the cherry picking Theorem
\ref{thm:CherryPicking} does with a set of taxa $T$,
initially arranged as a star-like tree (see Figure
\ref{fig:CherryPickingTree}). Consider two taxa
$i\neq j\in T$ fixed and let $q(ij,kl) = P(IK,JL) + P(IL,JK)$.
On a species tree on $T$ which displays the cherry $(ij)$, this
is the probability that a gene quartet tree sampled from the
taxa $i,j,k,l$ differs from the species quartet tree
$(ij,kl)$. Let
$$
\bar{q}(i,j) =
\sqrt[\binom{n-2}{2}]{\prod_{k,l}q(ij,kl)}
$$
be the average probability of discordance of a
sampled gene quartet tree with the corresponding quartet on
the species tree, for a set of four taxa containing $i$ and
$j$. We will also call this number the `average probability
of incomplete lineage sorting' for quartets containing the
taxa $i$ and $j$. With this terminology, the cherry picking
Theorem \ref{thm:CherryPicking} can be phrased as follows.

\begin{theorem}
	A pair of taxa $i,j\in T$ is a cherry on the true
	species tree $S$ if it minimizes the
	average probability for incomplete lineage sorting
	for quartets containing the taxa $i$ and $j$.
	\label{thm:avProbIncomplete}
\end{theorem}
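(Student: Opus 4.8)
The plan is to read this theorem off the Cherry Picking Theorem (Theorem \ref{thm:CherryPicking}). Concretely, I would show that, for a fixed taxon set $T$, the depth $f(i,j)$ of a pair is a strictly decreasing function of its average probability of incomplete lineage sorting $\bar{q}(i,j)$, and that this functional relationship is the \emph{same} for every pair. A pair then minimizes $\bar{q}(i,j)$ exactly when it maximizes $f(i,j)$, and Theorem \ref{thm:CherryPicking} identifies any such pair as a cherry on $S$.

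First I would unwind the definitions. With $q(ij,kl) = P(IK,JL)+P(IL,JK)$ as in the statement, Definition \ref{defn:Weight} gives $w(ij,kl) = -\log(\tfrac{3}{2}q(ij,kl))$ whenever $i,j,k,l$ are pairwise distinct. Summing over the $\binom{n-2}{2}$ two-element subsets $\{k,l\}$ of $T\setminus\{i,j\}$ — the very index set appearing in the product that defines $\bar{q}(i,j)$ — and using that the logarithm turns a product into a sum, I obtain
\begin{align*}
f(i,j) &= \sum_{k,l} w(ij,kl) = -\binom{n-2}{2}\log\tfrac{3}{2} - \sum_{k,l}\log q(ij,kl) \\
&= -\binom{n-2}{2}\Bigl(\log\tfrac{3}{2} + \log\bar{q}(i,j)\Bigr),
\end{align*}
where the last step uses $\log\bar{q}(i,j) = \frac{1}{\binom{n-2}{2}}\sum_{k,l}\log q(ij,kl)$, i.e.\ that $\bar{q}(i,j)$ is the geometric mean of the $q(ij,kl)$.

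From this identity the theorem follows at once. Since $n$ is fixed, the prefactor $-\binom{n-2}{2}$ is a negative constant that is the same for all pairs, and $\log$ is strictly increasing; hence $f(i,j)$ is a strictly decreasing function of $\bar{q}(i,j)$, uniformly in $(i,j)$. Therefore a pair attains maximal depth precisely when it attains minimal average probability of incomplete lineage sorting, and Theorem \ref{thm:CherryPicking} shows such a pair to be a cherry on $S$.

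The step I would single out as the crux is the observation that the number of complementary pairs, $\binom{n-2}{2}$, does \emph{not} depend on the chosen pair $(i,j)$; this is exactly what makes the monotone correspondence between $f$ and $\bar{q}$ uniform across pairs, so that the two extremal problems have identical solution sets. One should also record that each $q(ij,kl)$ is a sum of two positive gene-quartet probabilities and is therefore strictly positive under the coalescent model, so that all the logarithms and the geometric mean are well defined.
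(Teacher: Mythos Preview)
Your argument is correct and follows the same route as the paper: you reduce the claim to Theorem \ref{thm:CherryPicking} by exhibiting $f(i,j)$ as a strictly decreasing (affine in $\log$) function of $\bar q(i,j)$, uniformly in $(i,j)$. Your derivation is in fact slightly more precise than the paper's, which writes $f(i,j)=-\log\bar q(i,j)+C$ and suppresses the factor $\binom{n-2}{2}$ that you make explicit; this does not affect the conclusion, since that factor is a positive constant independent of the pair.
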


\begin{proof}
	Using Theorem \ref{thm:CherryPicking} we only have
	to show that $\bar{q}(i,j)$ is minimal if and only
	if $f(i,j)$ is maximal. But this follows from the
	fact that
	$$f(i,j) = -\log \bar{q}(i,j) + C,$$
	where $C$ is a constant independent of $i,j$.
\end{proof}

%In other words, we may recursively infer cherries on a
%species tree by looking in each step for the pair of taxa
%which minimizes the average probability of incomplete
%lineage sorting, if inferred as a cherry.

\begin{figure}
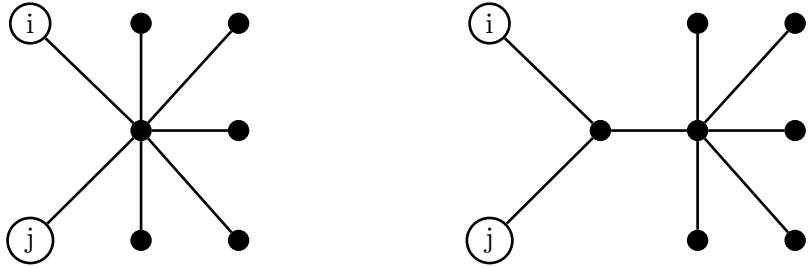

	\caption{One cherry picking step.}
	\begin{center}
$
\begin{tabular}{lm{2cm}r}
\psmatrix[mnode=circle, linewidth=1pt, rowsep=1cm, colsep=1cm]
i &
[fillstyle=solid,fillcolor=black] &
[fillstyle=solid,fillcolor=black] \\ 
 &
[fillstyle=solid,fillcolor=black] &
[fillstyle=solid,fillcolor=black] \\ 
j &
[fillstyle=solid,fillcolor=black] &
[fillstyle=solid,fillcolor=black] 
\ncline{2,2}{1,1}
\ncline{2,2}{1,2}
\ncline{2,2}{1,3}
\ncline{2,2}{2,3}
\ncline{2,2}{3,1}
\ncline{2,2}{3,2}
\ncline{2,2}{3,3}
\endpsmatrix
& &
\psmatrix[mnode=circle, linewidth=1pt, rowsep=1cm, colsep=1cm]
i & &
[fillstyle=solid,fillcolor=black] &
[fillstyle=solid,fillcolor=black] \\ 
& [fillstyle=solid,fillcolor=black] &
[fillstyle=solid,fillcolor=black] &
[fillstyle=solid,fillcolor=black] \\ 
j & &
[fillstyle=solid,fillcolor=black] &
[fillstyle=solid,fillcolor=black] 
\ncline{2,2}{2,3}
\ncline{2,2}{1,1}
\ncline{2,2}{3,1}
\ncline{2,3}{1,3}
\ncline{2,3}{1,4}
\ncline{2,3}{2,4}
\ncline{2,3}{3,4}
\ncline{2,3}{3,3}
\endpsmatrix
\end{tabular}
$
\end{center}
	\label{fig:CherryPickingTree}
\end{figure}

\section{Quartet neighbor joining
algorithms}\label{sect:Algorithms}

In this section we discuss how the above cherry picking
result can be used to design an algorithm which estimates a
species tree from (observed) gene quartet tree frequencies.

\subsection{A naive neighbor joining algorithm}

The first version of our neighbor joining algorithm
reconstructs the underlying species tree from the
(theoretical) gene quartet tree distributions.

\begin{algorithm}\label{alg:Naive} \textbf{Input}: A set of
	taxa $T=\lbrace t_{1},\dotsc,t_{n}\rbrace$
	containing at least four elements, and for each
	quartet $Q=(ab,cd)$ with $a,b,c,d\in T$ the
	probability $r(Q) := P_{S}(Q)$ of the
	quartet $Q$ under the multispecies coalescent model
	on the species tree $S$.

	\textbf{Output}: The species tree $S$.

	\textbf{Step 0.} Set $S$ to be the graph
	with vertex set $T$ and with empty edge set.

	\textbf{Step 1.} If $\lvert T\rvert \leq 3$ go to
	step 4.

	\textbf{Step 2.} For each unordered pair $\lbrace
	i,j \rbrace \subset T$ calculate the depth $f(i,j)$.
	Let $i,j \in T$ be a pair of taxa with maximal
	depth. Add a node $x$ to $S$ and draw
	edges from $i$ to $x$ and from $j$ to $x$. Replace
	$T$ by $T\cup\lbrace x\rbrace \setminus\lbrace
	i,j\rbrace$.
	
	\textbf{Step 3.} For each quartet $(xa,bc)$ with
	$a,b,c\in T\setminus\lbrace x\rbrace$ set
	\begin{equation*} r'{(xa,bc)} = \frac{1}{2}(r(ia,bc)
		+ r(ja,bc)), \end{equation*} and for all
		quartets $(ab,cd)$ which do not contain $x$
		set $r'(ab,cd) = r(ab,cd)$. Replace $r$ by
		$r'$. Go to step 1.

	\textbf{Step 4.} If $\lvert T\rvert = 3$, then add a
	vertex to $S$ and draw edges from this
	new vertex to the three elements of $T\subset
	vertices(S)$. \qed \end{algorithm}

\begin{remark} Each calculation of $f(i,j)$ requires
	$O(\lvert T\rvert^{2})$ logarithms and additions.
	Repeating this for every pair $i,j$ of taxa gives a
	computational complexity of $O(\lvert T\rvert^{4})$
	for step 2. This is also the complexity of one
	iteration of the algorithm. Since each step will
	identify exactly one cherry, we need $O(\lvert
	T\rvert^{5})$ iterations to reconstruct $S$, which
	gives a total complexity of $O(\lvert T\rvert^{5})$
	for Algorithm \ref{alg:Naive}. It is rather
	straightforward to improve step 2 so that the
	algorithm requires only $O(\lvert T\rvert^{4})$
	arithmetic operations (see Subsection
	\ref{subsect:complexity}).  \end{remark}

Using the result of Theorem \ref{thm:avProbIncomplete}, we
may summarize: \emph{With Algorithm \ref{alg:Naive} we have
described a greedy algorithm which inferes in each step the
cherry which requires minimal incomplete lineage sorting.}
Further, the fact that cherry picking is statistically
consistent implies that also species tree estimation by
Algorithm \ref{alg:Naive} is statistically consistent:

\begin{corollary}[Statistical consistency of Algorithm
	\ref{alg:Naive}] Let $N$ be the number of loci sampled from
	each taxon in $T$ and denote by $S_{N}$ the estimate
	for the species tree produced by Algorithm
	\ref{alg:Naive}. If the number of genes $N$ goes to
	infinity, then the probability that $S_{N}$ equals
	the true species tree $S$ approaches 1.
\end{corollary}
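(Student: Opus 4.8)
The plan is to mirror the proof of Corollary \ref{cor:consistent}, but now for the entire run of Algorithm \ref{alg:Naive} rather than for a single cherry-picking step. Writing $r=(r(IJ,KL))$ for the array of empirical quartet frequencies and $P_{S}$ for the array of theoretical probabilities, I would first reduce the statement to a purely \emph{deterministic} robustness claim: for every species tree there exists $\delta>0$ such that whenever $\lVert r-P_{S}\rVert_{\infty}<\delta$, Algorithm \ref{alg:Naive} fed with $r$ outputs the true species tree $S$. Granting this, the corollary follows exactly as before: each $r(IJ,KL)$ is a relative frequency, hence a consistent estimator of $P(IJ,KL)$, and since there are only finitely many quartets, $\Pr(\lVert r-P_{S}\rVert_{\infty}<\delta)\to 1$ as $N\to\infty$, so that $\Pr(S_{N}=S)\ge\Pr(\lVert r-P_{S}\rVert_{\infty}<\delta)\to 1$.

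I would prove the deterministic claim by induction on $n=\lvert T\rvert$. For $n\le 3$ the output is forced and any $\delta$ works. For $n\ge 4$, Theorem \ref{thm:CherryPicking} guarantees that on the exact input $P_{S}$ every pair of maximal depth is a cherry; since $T$ is finite there is a strictly positive gap between the maximal value of $f$ and the largest value attained at a non-cherry pair. As the depth depends continuously on the frequencies, there is a $\delta_{1}>0$ such that $\lVert r-P_{S}\rVert_{\infty}<\delta_{1}$ forces the empirical depth $\tilde f$ to be maximized only at genuine cherries of $S$ --- this is precisely the one-step content of Corollary \ref{cor:consistent}. Hence the first iteration contracts some true cherry $(ij)$ into a new leaf $x$, and Step 3 replaces $r$ by the averaged array $r'$ on the taxon set $T'=T\cup\{x\}\setminus\{i,j\}$ of size $n-1$.

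The crux is then a reduction lemma: on exact input one has
\[
\tfrac12\bigl(P_{S}(IA,BC)+P_{S}(JA,BC)\bigr)=P_{S'}(XA,BC)\qquad\text{for all }a,b,c\in T',
\]
where $S'$ is obtained from $S$ by deleting the cherry $(ij)$ and turning its parent into the leaf $x$. By Lemma \ref{lemQuartetProb} and \eqref{eqn:QuartetEdgeLength} a quartet probability depends only on the displayed induced topology and on the internal edge length of the induced four-taxon subtree. Because $i$ and $j$ hang off the common parent $u$ of the cherry, the induced quartets on $\{i,a,b,c\}$, on $\{j,a,b,c\}$ and on $\{x,a,b,c\}$ share both topology and internal edge length, the pendant edges $i\to u$ and $j\to u$ contributing nothing to that length; hence $P_{S}(IA,BC)=P_{S}(JA,BC)=P_{S'}(XA,BC)$ and the average is exact. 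Thus on exact input $r'$ equals $P_{S'}$, and since the update of Step 3 is affine, it is Lipschitz, giving $\lVert r'-P_{S'}\rVert_{\infty}\le L\,\lVert r-P_{S}\rVert_{\infty}$ for some constant $L$.

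Applying the induction hypothesis to $S'$ on $n-1$ taxa yields a threshold $\delta'$ guaranteeing correct reconstruction of $S'$, and reattaching the contracted cherry at $x$ upgrades this to a reconstruction of $S$; taking $\delta=\min(\delta_{1},\delta'/L)$ and minimizing over the finitely many cherries the first step might contract closes the induction. I expect the reduction lemma, together with the bookkeeping needed to propagate one threshold through all $n-3$ iterations, to be the main obstacle. A minor point to handle carefully is that ties in the depth are harmless: every maximizer is a true cherry, and by the order-independence of assembling a tree from its cherries, contracting any true cherry first still yields $S$, so the perturbation argument only ever needs to separate cherries from non-cherries, never one cherry from another.
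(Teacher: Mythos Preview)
Your proposal is correct and follows the same inductive strategy the paper gestures at with its one-line proof (``This follows by a repeated application of Corollary~\ref{cor:consistent}''), but you have actually supplied the content that sentence suppresses: the reduction lemma showing that when $(ij)$ is a genuine cherry the Step~3 averaging maps $P_{S}$ exactly to $P_{S'}$, together with the Lipschitz bound that propagates a single $\delta$-tolerance through all $n-3$ iterations. The paper neither states nor proves this reduction step, so your argument is strictly more complete than the paper's own.
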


\begin{proof}
	This follows by a repeated application of Corollary
	\ref{cor:consistent}.
\end{proof}

In practical applications two problems may occur using
Algorithm \ref{alg:Naive}: First, there might be several
non-disjoint pairs of taxa with maximal depth. This could be
resolved by chosing one of these pairs randomly. A more
serious problem occurs when many of the empirical gene
quartet distributions are 0. This will be discussed in the
following subsection.

\subsection{Perturbing weights and quartet-NJ}

Algorithm \ref{alg:Naive} works well if one uses as input
the theoretical probabilities for quartet gene trees. It
also works well, if one uses relative quartet frequencies
which are `very close' to the probabilities $P(IJ,KL)$ (and
in particular non-zero). In other cases, the result might
be problematic, as we are going to explain now.

Imagine that for some reason (e.g.\ very long branch lenghts
on the species tree) the observed quartet gene trees reflect
very precisely the topology of the species tree. This might
mean in the extreme case that for every four-taxon subset
$\lbrace I,J,K,L\rbrace$ one observes only this very quartet
gene tree which is displayed also by the (unknown) species
tree. In some sense this situation should be optimal, since
the observed gene quartet trees fit together without
conflict and thus yield an unambiguous estimate for the
species tree.

However, let us run Algorithm \ref{alg:Naive} with the
empirical depth function $\tilde{f}$ in place of $f$ and
consider any pair $i,j$ of taxa. Regardless of the choice of
$i$ and $j$, there exist $k,l\in T$ such that $(ij,kl)$ is a
quartet on the species tree, and hence by our assumption
$r(IJ,KL) = 1$. Calculating the empirical depth of $(i,j)$
yields \begin{equation}\label{eqn:infty} \tilde{f}(i,j) =
	\displaystyle\sum_{k,l\in T}
	-\log(\frac{3}{2}(1-r(IJ,KL)) = +\infty.
\end{equation} Thus \emph{every} pair $(i,j)$ maximizes
$\tilde{f}$, and so in each iteration of Algorithm
\ref{alg:Naive} the choice of a cherry is completely
arbitrary. So this procedure fails in a situation, which has
to be considered the `easiest' possible in some obvious
sense.

A possible solution to this problem is to perturb the
arguments in the logarithms in equation \eqref{eqn:infty} by
a small number $\epsilon > 0$. In other words, we fix
$0\leq \epsilon \ll 1$ and calculate, for each pair of taxa
$i,j$, the `perturbed depths'
\begin{align}\label{eqn:finite}
	f_{\epsilon}(i,j) &= \displaystyle\sum_{k,l\in T}
	-\log(\frac{3}{2}((1 + \epsilon) - P(IJ,KL)) <
	\infty,\\
	\tilde{f}_{\epsilon}(i,j) &= \displaystyle\sum_{k,l\in T}
	-\log(\frac{3}{2}((1 + \epsilon) - r(IJ,KL)) <
	\infty.
\end{align}
Obviously, for $\epsilon \to 0$ we have $f_{\epsilon}(i,j)
\to f_{0}(i,j)=f(i,j)$ and $\tilde{f}_{\epsilon}(i,j) \to
\tilde{f}_{0}(i,j) = \tilde{f}(i,j)$, for all pairs
$(i,j)\in T^{2}$. From this we obtain the following easy

\begin{lemma} Assume that the theoretical probabilities
	$P(IJ,KL)$ are known for each four taxon subset
	$\lbrace I,J,K,L\rbrace\subset T$ and are used as
	input for Algorithm \ref{alg:Naive}. Then there
	exists a constant $c>0$ such that Algorithm
	\ref{alg:Naive} computes the same results with
	$f_{\epsilon}$ in place of $f$, for each $0\leq
	\epsilon < c$. In other words, if $S_{\epsilon}$ is
	the species tree inferred by Algorithm
	\ref{alg:Naive} with $f_{\epsilon}$ in place of $f$,
	then the limit $\underset{\epsilon \to 0}\lim
	S_{\epsilon}$ exists and is equal to $S=S_{0}$.
	\label{lem:epsilon} \end{lemma}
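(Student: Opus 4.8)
The plan is to prove slightly more than the statement: for all sufficiently small $\epsilon \geq 0$ the perturbed algorithm outputs exactly the true species tree $S$, so that in particular $S_\epsilon = S_0 = S$ for $0 \le \epsilon < c$ and the limit exists trivially. The mechanism I will exploit is that Algorithm \ref{alg:Naive} reconstructs $S$ as soon as it collapses a genuine cherry of the current tree at every iteration: by Theorem \ref{thm:CherryPicking} every pair of maximal depth is such a cherry, the update in Step 3 is independent of $\epsilon$, and collapsing a cherry returns a problem of the same type on one fewer taxon, so repeated cherry picking rebuilds $S$ no matter how ties among maximal-depth pairs are broken. It therefore suffices to guarantee that for small $\epsilon$ every pair maximizing $f_\epsilon$ also maximizes $f$.

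First I would establish the comparison at a single iteration. Fix a state of the algorithm --- a current taxon set together with the current quartet values $r$ --- and let $M = \max_{i,j} f(i,j)$ with $A$ the set of maximizers. Since there are finitely many pairs, there is a gap $\gamma > 0$ between $M$ and the largest value of $f$ off $A$; if $A$ contains all pairs we may take $\gamma = +\infty$, and then every pair is already a cherry so nothing is to be shown. A one-line estimate then gives that whenever $\sup_{i,j} |f_\epsilon(i,j) - f(i,j)| < \gamma/2$, every maximizer of $f_\epsilon$ lies in $A$: off $A$ the perturbed depth stays strictly below $M - \gamma/2$, while on $A$ it stays strictly above $M - \gamma/2$.

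Next I would make the threshold on $\epsilon$ uniform over all iterations. The key observation is that, as long as the algorithm only collapses genuine cherries, the set of states it can reach is finite, since each reachable state is a partial collapse of $S$; moreover all quartet values stay strictly below $1$, because Step 3 averages values already below $1$ and the theoretical inputs satisfy $P(IJ,KL) < 1$ (Lemma \ref{lemQuartetProb}). Hence $f$ is finite at every reachable state, and the finitely many functions $\epsilon \mapsto f_\epsilon(i,j)$ --- indexed by reachable state and pair, see \eqref{eqn:finite} --- are continuous at $\epsilon = 0$ with limit $f$. Uniform convergence over this finite index set yields a single $c > 0$ with $\sup |f_\epsilon - f| < \gamma_{\min}/2$ at every reachable state simultaneously, where $\gamma_{\min}$ is the minimum of the finitely many positive gaps. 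For $0 \le \epsilon < c$ the inclusion of maximizers then holds at every step, so the perturbed run collapses only cherries and outputs $S$.

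The main obstacle, and the reason the proof is not a direct step-by-step induction comparing the two runs, is the possibility of ties: if several pairs share the maximal depth, the perturbation may resolve the tie toward a different but equally valid cherry than the unperturbed run, after which the two executions diverge through different intermediate states. This is precisely why I route the argument through the tie-robust fact that any sequence of cherry picks reconstructs $S$, rather than through a pairwise matching of the two runs, and why it is the finiteness of the reachable-state set --- not a single gap at a single step --- that supplies the uniform constant $c$.
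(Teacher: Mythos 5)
Your proof is correct, and its core ingredients --- finitely many values of $f$, hence a positive gap below the maximum, combined with continuity of $f_{\epsilon}$ in $\epsilon$ --- are precisely the ingredients the paper's own (one-sentence) proof invokes; the difference is that you actually assemble them into a complete argument, and in doing so you settle two points the paper glosses over. First, ties: if several pairs attain the maximal depth, the perturbed and unperturbed runs may pick different (equally valid) cherries and then pass through different intermediate states, so the naive reading of the paper's sketch --- ``same maximizers, hence same execution'' --- does not literally yield $S_{\epsilon}=S_{0}$; your detour through the tie-robust fact that \emph{any} sequence of genuine cherry collapses rebuilds $S$, so that it suffices to show every maximizer of $f_{\epsilon}$ is a maximizer of $f$ and hence a cherry by Theorem \ref{thm:CherryPicking}, is the right fix and gives the stronger conclusion $S_{\epsilon}=S$ for all $0\leq\epsilon<c$. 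Second, uniformity: the constant $c$ must work at every iteration, not only the first, and your finiteness-of-reachable-states argument (each reachable state is a partial collapse of $S$, and Step 3 does not involve $\epsilon$) legitimately extracts a single $c$ from the minimum of finitely many gaps. The one step you assert without justification is that collapsing a cherry ``returns a problem of the same type''; this is true but deserves a line: if $(ij)$ is a cherry of $S$, then for any other taxa $a,b,c$ the quartets on $\lbrace i,a,b,c\rbrace$ and $\lbrace j,a,b,c\rbrace$ have the same topology \emph{and} the same internal edge length on $S$, so $P(IA,BC)=P(JA,BC)$ by Lemma \ref{lemQuartetProb}, and the Step 3 average $r'(xa,bc)$ is therefore exactly the theoretical coalescent probability on the collapsed species tree --- which is what licenses applying Theorem \ref{thm:CherryPicking} at later iterations. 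Since the paper makes the same implicit assumption (both here and in its consistency corollary for Algorithm \ref{alg:Naive}), this is a refinement of, not a gap relative to, the paper's own standard of rigor.
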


\begin{proof} This follows from the fact that $S$ is a
	binary tree, that there are only finitely many
	values of $f$ and that $f_{\epsilon}(i,j)$ depends
	continuously on $\epsilon$.
\end{proof}

This suggests that for practical applications it will be
reasonable to fix $\epsilon>0$ `small enough', and run
Algorithm \ref{alg:Naive} with the perturbed empirical depth
function $\tilde{f}_{\epsilon}$ in place of $f$ in order to
avoid `infinite depths' as in the discussion at the
beginning of this subsection. Hence we obtain the following
modification of Algorithm \ref{alg:Naive}.

\begin{algorithm}[Quartet Neighbor
	Joining, Quartet-NJ]\label{alg:improved}
	\textbf{Input}: A finite set of taxa $T=\lbrace
	t_{1},\dotsc,t_{n}\rbrace$, a `small' constant
	$\epsilon > 0$, and for each four taxon subset
	$\lbrace a,b,c,d\rbrace \subset T$ the relative quartet
	gene tree frequencies $r{(AB,CD)}$, $r{(AC,BD)}$,
	$r{(AD,BC)}$.

	\textbf{Output}: An unrooted tree
	$\tilde{S}_{\epsilon}$ on $T$ which is our estimate
	for the topology of the species tree $S$ on $T$.

	\textbf{Algorithm}: Run Algorithm \ref{alg:Naive}
	with the depth function $f(i,j)$ substituted by the
	empirical depth $\tilde{f}_{\epsilon}(i,j)$. The result
	of this calculation is $\tilde{S}_{\epsilon}$.
\qed \end{algorithm}

Indeed, if we apply this modified algorithm to the
problematic situation at the beginning of this
subsection, we will obtain (for any choice of $\epsilon$) a
fully resolved correct estimate for the species tree $S$. Of
course, this algorithm has the same complexity as Algorithm
\ref{alg:Naive}.
\newline

We want to claim that, for sufficiently small values of
$\epsilon$, quartet-NJ produces a asymptotically correct
estimate of the true species tree on $T$. To this end, we
first prove

\begin{lemma} There exists a constant $c>0$ such that for
	all $0<\epsilon<c$ the trees $\tilde{S}_{\epsilon}$
	are equal. In other words, the limit
	$\underset{\epsilon \to 0}\lim \tilde{S}_{\epsilon}$
	exists.
\label{lem:epsilon_empirical} \end{lemma}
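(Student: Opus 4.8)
The plan is to mirror the proof of Lemma~\ref{lem:epsilon}, but with one essential refinement: in the empirical setting the limiting depth $\tilde{f}(i,j)=\tilde{f}_{0}(i,j)$ may equal $+\infty$, so the bare ``finitely many values plus continuity'' argument no longer suffices. Instead I would extract the precise rate at which each perturbed depth diverges as $\epsilon\to 0$, and use that rate to break ties between divergent depths.

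First I would attach to each pair, for the current frequency table $r$, the integer
\[
m(i,j)=\#\{\,\{k,l\}\subset T: r(IJ,KL)=1\,\},
\]
the number of quartets containing the cherry $(ij)$ that are \emph{saturated}. Using that the three quartet frequencies on $\{i,j,k,l\}$ sum to one, so that $r(IK,JL)+r(IL,JK)=1-r(IJ,KL)$, one rewrites equation~\eqref{eqn:finite} as a sum of terms $-\log\bigl(\tfrac{3}{2}(r(IK,JL)+r(IL,JK)+\epsilon)\bigr)$. Splitting off the $m(i,j)$ saturated terms, each equal to $-\log(3/2)-\log\epsilon$, from the remaining terms, each continuous at $\epsilon=0$, yields the expansion
\[
\tilde{f}_{\epsilon}(i,j) = -\,m(i,j)\log\epsilon + g_{ij}(\epsilon),
\]
with $g_{ij}$ continuous on $[0,\infty)$. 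I would then compare any two pairs for small $\epsilon$: since $-\log\epsilon\to+\infty$, a strictly larger $m$ forces a strictly larger depth, and when the two $m$-values agree the comparison is decided in the limit by the finite parts $g_{ij}(0)$. Hence there is a $c_{1}>0$ so that all strict inequalities and all equalities among the $\tilde{f}_{\epsilon}(i,j)$ persist throughout $(0,c_{1})$; in particular the set of pairs maximizing $\tilde{f}_{\epsilon}$ is constant there, and under any fixed ($\epsilon$-independent) tie-breaking rule the cherry $(i^{*},j^{*})$ chosen in the first iteration is the same for all $0<\epsilon<c_{1}$.

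The key structural point is that the perturbation $\epsilon$ enters only the depth computation in Step~2, and not the averaging update in Step~3. Consequently the refined table $r'$ produced after the first, now $\epsilon$-stable, merge is itself independent of $\epsilon$ on $(0,c_{1})$. The remaining computation is therefore an instance of the same algorithm on the smaller taxon set $T'=T\cup\{x\}\setminus\{i^{*},j^{*}\}$ with the fixed frequencies $r'$, and I would close the argument by induction on $\lvert T\rvert$, the base case $\lvert T\rvert\le 3$ being $\epsilon$-independent by Step~4. This produces a constant $c_{2}>0$ for which the tail of the computation is $\epsilon$-stable, and taking $c=\min(c_{1},c_{2})$ gives a single constant on which $\tilde{S}_{\epsilon}$ is constant, so the limit exists.

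I expect the only genuine obstacle to be the combination of infinite limits and ties: one must show not merely that each $\tilde{f}_{\epsilon}(i,j)$ converges in $[0,+\infty]$, but that the \emph{entire} comparison relation among the pairs stabilizes, which is exactly what the lexicographic rule ``first by $m(i,j)$, then by $g_{ij}(0)$'' delivers. Everything else---the continuity of $g_{ij}$, the $\epsilon$-independence of the frequency updates, and the induction---is routine.
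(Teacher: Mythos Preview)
Your approach is essentially the paper's own: isolate the number of ``saturated'' summands (what the paper calls ``the number of summands in $\tilde{f}_{\epsilon}(i,j)$ which approach infinity''), let that count dominate the comparison for small $\epsilon$, and fall back on the continuous finite part otherwise; the paper compresses your induction into the remark that ``there are only finitely many cherries to pick.'' Your write-up is in fact more explicit than the paper's (the expansion $\tilde f_\epsilon=-m\log\epsilon+g$, the observation that Step~3 is $\epsilon$-free); the only residual subtlety---shared by the paper---is the case $m(i,j)=m(k,l)$ and $g_{ij}(0)=g_{kl}(0)$ with $g_{ij}\neq g_{kl}$, where continuity alone does not pin down the sign of the difference for small $\epsilon$, but this is harmless since each $g_{ij}$ is a finite sum of logarithms and hence real-analytic at~$0$.
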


\begin{proof} Since there are only finitely many cherries to
	pick, we may restrict our considerations to the
	picking of the first cherry. We have to distinguish
	two cases: First assume that there are taxa $i,j$
	such that $\tilde{f}(i,j) = \infty$, i.e. the set
	$Inf = \lbrace (i,j) \suchThat \tilde{f}(i,j) =
	\infty \rbrace$ is nonempty. If $\epsilon$ is small
	enough, then the maximum of the values of
	$\tilde{f}_{\epsilon}$ is attained at one of the
	pairs in $Inf$. At which of those pairs the maximum
	is attained then only depends on the number of
	summands in $\tilde{f}_{\epsilon}(i,j)$ which
	approach infinity as $\epsilon$ tends to zero. This
	number is clearly independent of $\epsilon$, whence
	the set of potential cherries to pick is independent
	of $\epsilon$.

	In the second case, there are no elements in the set
	$Inf$. This means that each of the perturbed
	empirical depths $\tilde{f}_{\epsilon}(i,j)$
	approaches the \emph{finite} value $\tilde{f}(i,j)$
	as $\epsilon$ tends to zero. This again means that,
	for $\epsilon$ small enough, the pair which
	maximizes $\tilde{f}_{\epsilon}$ does not depend on
	$\epsilon$.
\end{proof}

Combining this with Lemma \ref{lem:epsilon} we obtain the
desired result.

\begin{theorem}[Statistical consistency of quartet-NJ for
	small $\epsilon$] The limit $\underset{\epsilon\to
	0}{\lim}(\tilde{S}_{\epsilon})$ exists and is a
	statistically consistent estimate for the true
	species tree $S=S_{0}$. In particular, quartet-NJ
	ist statistically consistent. \label{thm:main}
\end{theorem}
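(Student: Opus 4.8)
The plan is to prove Theorem \ref{thm:main} by combining the two preceding lemmas with the consistency result of Corollary \ref{cor:consistent}. The statement asserts two things: (i) the limit $\lim_{\epsilon\to 0}\tilde{S}_{\epsilon}$ exists, and (ii) this limit is a statistically consistent estimator of the true species tree $S$. Existence of the limit is already furnished by Lemma \ref{lem:epsilon_empirical}, so the heart of the argument is statistical consistency. I would structure the proof as a two-level limiting argument: first send $\epsilon\to 0$ to pass from the perturbed empirical depths $\tilde{f}_{\epsilon}$ to the unperturbed $\tilde{f}$, and then send the number of loci $N\to\infty$ to pass from $\tilde{f}$ to the theoretical depth $f$, where Theorem \ref{thm:CherryPicking} applies.

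\textbf{First step.} I would fix the underlying species tree $S$ and invoke Lemma \ref{lem:epsilon} to produce a constant $c>0$ such that, when Algorithm \ref{alg:Naive} is fed the \emph{theoretical} probabilities $P(IJ,KL)$, the perturbed depth $f_{\epsilon}$ yields exactly the tree $S=S_{0}$ for every $0\le\epsilon<c$. Thus $\lim_{\epsilon\to 0}S_{\epsilon}=S$. In parallel, Lemma \ref{lem:epsilon_empirical} gives a (possibly different) constant $c'>0$ so that the empirically computed trees $\tilde{S}_{\epsilon}$ agree for all $0<\epsilon<c'$; denote their common value by $\tilde{S}_{0}:=\lim_{\epsilon\to 0}\tilde{S}_{\epsilon}$. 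This establishes assertion (i) and reduces the theorem to showing that $\tilde{S}_{0}$ is a consistent estimator of $S=S_{0}$.

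\textbf{Second step.} I would then argue consistency by comparing the empirical procedure with the theoretical one for a fixed small $\epsilon$. Since the map sending the vector of quartet probabilities to the output tree of Algorithm \ref{alg:Naive} (run with $f_{\epsilon}$) is locally constant around the true probability vector $(P(IJ,KL))$ -- the depths $f_{\epsilon}$ are finite, continuous functions of the inputs, and the cherry chosen in each step is determined by which pair strictly maximizes the depth -- there is a $\delta>0$ so that whenever $\lvert r(IJ,KL)-P(IJ,KL)\rvert<\delta$ for all four-taxon subsets, the empirical run with $\tilde{f}_{\epsilon}$ produces the same tree as the theoretical run with $f_{\epsilon}$, namely $S$. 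By the law of large numbers (relative frequencies $r$ converge to the probabilities $P$), the probability of the event $\lbrace\lvert r(IJ,KL)-P(IJ,KL)\rvert<\delta\ \forall\,I,J,K,L\rbrace$ tends to $1$ as $N\to\infty$. On this event $\tilde{S}_{\epsilon}=S_{\epsilon}=S$, which is precisely statistical consistency; this is the same mechanism already used in Corollary \ref{cor:consistent}, now applied to the perturbed depths.

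\textbf{Main obstacle.} The delicate point is the interchange of the two limits $\epsilon\to 0$ and $N\to\infty$. One must ensure that the threshold $\delta$ (separating the maximal depth from the second-largest) can be chosen \emph{uniformly} in $\epsilon$ on an interval $(0,c'')$, rather than degenerating as $\epsilon\to 0$ -- otherwise the consistency of $\tilde{S}_{\epsilon}$ for each fixed $\epsilon$ would not transfer cleanly to the limiting tree $\tilde{S}_{0}$. The cleanest way around this is to exploit that for $0<\epsilon<\min(c,c')$ \emph{both} the theoretical tree $S_{\epsilon}$ and the empirical limit $\tilde{S}_{0}$ are already constant in $\epsilon$, so it suffices to verify consistency at a single admissible value of $\epsilon$ and then identify $\tilde{S}_{0}$ with that common empirical tree via Lemma \ref{lem:epsilon_empirical}. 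In short, the finiteness of the tree space and the finitely many possible values of $f$ (already noted in the proof of Lemma \ref{lem:epsilon}) collapse what looks like a double limit into a single consistency statement, and I would make sure to state explicitly that the constant $\delta$ is taken for one fixed $\epsilon$ in the overlap of the three admissible ranges.
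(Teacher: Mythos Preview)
Your proposal is correct and follows essentially the same route as the paper: existence of the limit via Lemma \ref{lem:epsilon_empirical}, then consistency via continuity of the depth functions together with the law of large numbers and Lemma \ref{lem:epsilon}. The one point where the paper proceeds a little differently is precisely your ``main obstacle'': rather than fixing a single $\epsilon$ and then invoking the (data-dependent) constant $c'$ from Lemma \ref{lem:epsilon_empirical} to identify $\tilde S_{\epsilon}$ with $\tilde S_{0}$, the paper treats $F(i,j)=\tilde f_{\epsilon}(i,j)-f_{\epsilon}(i,j)$ as a continuous function of the pair $((r(IJ,KL)),\epsilon)$ which vanishes along $\{(P(IJ,KL))\}\times\mathbb{R}$, and extracts a single deterministic constant $c>0$ such that $\tilde S_{\epsilon}=S_{\epsilon}=S$ holds simultaneously for all $\epsilon<c$ whenever $\lvert r-P\rvert<c$. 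This joint-continuity framing sidesteps the circularity in your patch (your overlap $(0,\min(c,c'))$ involves the random $c'$, so a deterministic choice of $\epsilon$ is not guaranteed to lie in it for every realization of the data); in effect it proves directly that $c'\ge c$ on the high-probability event, which is exactly what your argument needs.
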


\begin{proof}
	The existence of the limit in the theorem is
	established by Lemma \ref{lem:epsilon_empirical}. It
	remains to prove that it is a statistically
	consistent estimate for $S$.
	Recall that the definition of the function
	$\tilde{f}_{\epsilon}(i,j)$ depends on the relative
	frequencies $r(IJ,KL)$. If we consider the
	probabilities $P(IJ,KL)$ known and fixed and
	moreover fix the pair of taxa $(i,j)$, then we may
	consider the function
	\begin{equation}
		F(i,j) = \tilde{f}_{\epsilon}(i,j) -
		f_{\epsilon}(i,j):\quad
		\mathbb{R}_{>0}^{\binom{n}{4}}\times
		\mathbb{R} \to \mathbb{R},
		\label{eqn:difference}
	\end{equation}
	depending on the `variables' $r(IJ,KL)$ and
	$\epsilon$. This is a continuous function which
	vanishes on the line $\lbrace (P(IJ,KL))_{(IJ,KL)})
	\rbrace \times \mathbb{R} \subset
	\mathbb{R}^{\binom{n}{4}}\times\mathbb{R}$.
	Thus for every $\delta$ there exists an open
	ball centered at $((P(IJ,KL))_{(IJ,KL)}, 0)$ which
	is mapped to $(-\delta, \delta)\subset\mathbb{R}$ by
	$F(i,j)$. In particular, there exists a positive constant
	$c > 0$ such that for every $(i,j)$ we have
	$\lvert \tilde{f}_{\epsilon}(i,j) -
	f_{\epsilon}(i,j)\rvert < \delta$ if $\lvert r(IJ,KL) -
	P(IJ,KL)\rvert < c$ and $\epsilon < c$.
	If we chose $\delta$ small enough, we thus may
	conclude that $\tilde{S}_{\epsilon} = S_{\epsilon}$
	for every $\epsilon < c$ and for all relative
	frequencies satisfying $\lvert r(IJ,KL) -
	P(IJ,KL)\rvert < c$. Moreover, by Lemma
	\ref{lem:epsilon} we may assume, by
	reducing $c$ if necessary, that $S_{\epsilon} = S$
	for every $\epsilon < c$.

	Taking this together we obtain that, provided $\lvert r(IJ,KL) -
	P(IJ,KL)\rvert < c$ for every quartet $(ij,kl)$,
	$\underset{\epsilon\to 0}\lim \tilde{S}_{\epsilon} = S$.
	Since $r(IJ,KL)$ is a statistically consistent
	estimate for $P(IJ,KL)$, this proves that
	$\underset{\epsilon\to 0}\lim \tilde{S}_{\epsilon}$ is a
	statistically consistent estimate for $S$.
\end{proof}

The question of how to find an appropriate $\epsilon >
0$ to run the neighbor joining algorithm will of
course depend on the special problem instance and is
left open here.

\subsection{Reduction of computational
complexity}\label{subsect:complexity}

Here we briefly mention a complexity reduction
for the quartet neighbor joining algorithm. Since Algorithm
\ref{alg:Naive} and \ref{alg:improved} are completely
equivalent in this respect, we formulate this reduction only
with the simpler notation of Algorithm \ref{alg:Naive}.

We consider step 2 in Algorithm \ref{alg:Naive}. In our
present formulation, this step calculates the depth $f(i,j)$
by adding up $O(\lvert T\rvert^{2})$-many quartet weights
for each pair of taxa $i,j$.
However, most of these quartet weights will not have changed
during the last iteration of the algorithm.

Assume that in the previous iteration the cherry
$(i_{0}j_{0})$ was identified and joined by a new vertex
$x$. Then for any pair of taxa $i,j$ which are still present in
$T$ we may calculate the new depth $f_{new}(i,j)$ from the old one
by the formula
$$ f_{new}(i,j) = f(i,j) - \displaystyle\sum_{k\in T}(
w(ij,i_{0}k) + w(ij,j_{0}k) + \displaystyle\sum_{k\in T}
w_{new}(ij,kx).$$
This reduces the complexity of step 2 from $O(\lvert
T\rvert^{4})$ to $O(\lvert T\rvert^{3})$, and hence the
overall complexity of Algorithms \ref{alg:Naive} and
\ref{alg:improved} are reduced by this modification to
$O(\lvert T\rvert^{4})$.

\section{Application and simulations}\label{sect:AppSim}

\subsection{Application to a prokaryote data
set}\label{subsect:prokaryotes}

In order to test quartet neighbor joining on real data,
Algorithm \ref{alg:improved} was applied to a set of protein
sequences from nine prokaryotes, among them the two archaea
\emph{Archaeoglobus fulgidus} (AF) and \emph{Methanococcus
jannaschii} (MJ), as well as the seven bacteria
\emph{Aquifex aeolicus} (AQ), \emph{Borrelia burgdorferi}
(BB), \emph{Bacillus subtilis} (BS), \emph{Escherichia coli}
(EC), \emph{Haemophilus influenzae} (HI), \emph{Mycoplasma
genitalium} (MG), and \emph{Synechocystis sp.} (SS).

The choice of organisms follows Teichmann and Mitchison
\cite{Teichmann_Mitchison}, while the multiple sequence
alignments where taken out of the dataset used by Cicarelli
et al.\ in \cite{Cicarelli}. For each of the 28 protein
families in the list

\begin{enumerate}
	\item Ribosomal proteins, small subunits: S2, S3,
		S5, S7, S8, S9, S11, S12, S13, S15, S17, L1, L3,
		L5, L6, L11, L13, L14, L15, L16, L22;
	\item tRNA-synthetase: Leucyl-, Phenylalanyl-,
		Seryl-, Valyl;
	\item Other: GTPase, DNA-directed RNA polymerase
		alpha subunit, Preprotein translocalse
		subunit SecY,
\end{enumerate}

a distance matrix was calculated using BELVU
\cite{Sonnhammer} with `scoredist'-distance correction
(Sonnhammer and Hollich \cite{Sonnhammer}). For each
distance matrix, a set of quartet trees was inferred in the
classical way by finding, for each four taxon set $\lbrace
i,j,k,l\rbrace$, the unrooted quartet $(IJ,KL)$ which
maximizes the value
$$ d(I,K) + d(I,L) + d(J,K) + d(J,L) -
2d(I,J) - 2d(K,L),
$$
where $d(-,-)$ denotes the respective
entry in the distance matrix.

Algorithm \ref{alg:improved} was then run with the
parameter $\epsilon = 10^{-6}, \epsilon = 10^{-9}$, and $\epsilon
= 10^{-12}$ on the quartet distribution obtained from
analyzing all the 28 protein families above, and in a
second try on the quartet distributions obtained only by
the ribosomal proteins. The resulting tree topologies are
depicted in Figures \ref{fig:all} and \ref{fig:ribosomal},
respectively (The root of these trees is of course not
predicted by Algorithm \ref{alg:improved}. Rather it was
placed a posteriori on the branch which separates archaea from
bacteria on the unrooted output of the algorithm.) The
different choices of $\epsilon$ did not affect the result in
these calculations.
\newpage
\begin{figure}
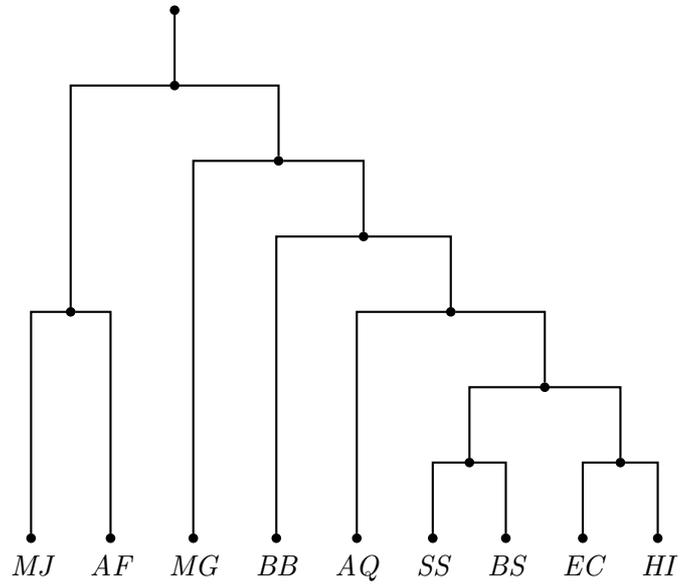

	\centering
\begin{newicktree}
	\nobranchlengths\setunitlength{1cm}
	\drawtree{((MJ:3,AF:3):3,(MG:5,(BB:4,(AQ:3,((SS:1,BS:1):1,(EC:1,HI:1):1):1):1):1):1):1;}
\end{newicktree}
	\caption{The neighbor joining species tree topology for the
	nine prokaryotes, using multiple alignments for all
	28 protein families.}
	\label{fig:all}
\end{figure}
\begin{figure}
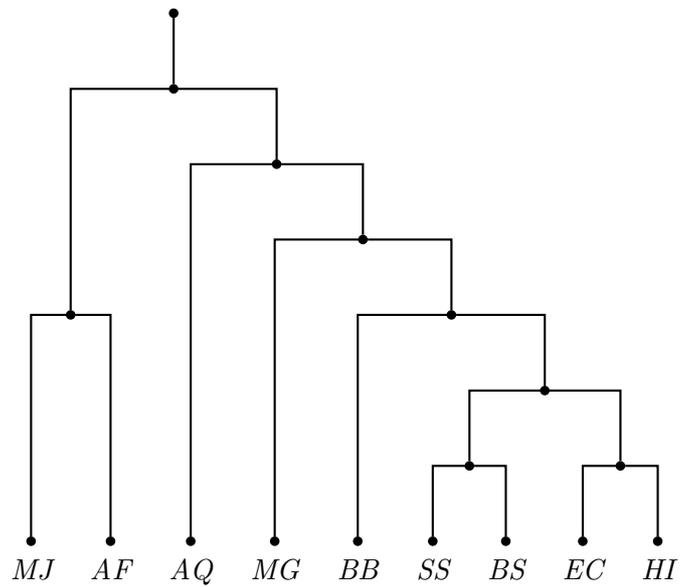

	\centering
\begin{newicktree}
	\nobranchlengths\setunitlength{1cm}
	\drawtree{((MJ:3,AF:3):3,(AQ:5,(MG:4,(BB:3,((SS:1,BS:1):1,(EC:1,HI:1):1):1):1):1):1):1;}
\end{newicktree}
	\caption{The neighbor joining species tree topology for the
	nine prokaryotes, using only the multiple
	alignments of the ribosomal proteins above.}
	\label{fig:ribosomal}
\end{figure}
\clearpage

\subsection{Simulations with Mesquite}

As a first test of performance of Algorithm
\ref{alg:improved} two series of simulations were performed
as follows. For a certain choice of a species tree $S$ on a
set of taxa $T$, a coalescent process was repeatedly
simulated using the coalescence package of Mesquite
\cite{mesquite}, \cite{mesquite_coal}. Each simulation
yielded a set of gene trees on $T$, and from these the
frequency of each unrooted quartet gene tree with leaves in
$T$ was determined. These frequencies where then submitted
to Algorithm \ref{alg:improved} and the resulting (unrooted)
tree was compared with the (unrooted version of the) species
tree $S$. We report here the proportion of correct
inferences of the unrooted species tree in the different
situations.
\newline

In the first series of simulations the underlying species
tree was the 5-taxon caterpillar tree $(a,(b,(c,(d,e))))$,
with all internal branch lengths set equal (of course, the
length of the pending edges does not have an impact, as we
consider only one lineage per taxon). Algorithm
\ref{alg:improved} was run 1000 times using 5, 10, 20 and 50
sampled gene trees per trial, 500 times using 100 gene
trees, 250 times using 200 gene trees, and 100 times using
500 simulated gene trees per trial. The proportion of trials
which yielded the correct unrooted species tree topology is
reported in Table \ref{tab:caterpillar}. Note that
simulations under the 5-taxon caterpillar tree are also
performed by Liu and Yu \cite{Liu_Neighbor} in order to
assess the performance of their `neighbor joining algorithm
for species trees'. For our choices of branch lengths and
sample sizes, the performance of Algorithm
\ref{alg:improved} seems to be roughly equal to the
performance of `neighbor joining for species trees' (Liu and
Yu \cite{Liu_Neighbor}, Figure 2).  \newline

In a second series of simulations, the underlying species
tree was the tree inferred by Algorithm \ref{alg:improved}
for the nine prokaryotes in Section
\ref{subsect:prokaryotes}, see Figure \ref{fig:all}. Again,
for different internal branch lengths and different numbers
of gene trees per trial, 1000 trials (in the case of 5, 10,
20 and 50 gene trees per trial), and 500 resp. 250 resp. 100
trials (in the case of 100 resp. 200 resp. 500 gene trees
per trial) were run, and the proportions of correctly
inferred unrooted species tree topologies are reported in
Table \ref{tab:prokaryotes}.

Two comments are in order: (1) In fact, for each choice
of the parameter $x$ (and fixed number of gene trees per
trial) two simulations were performed. For the first, the
length of the branch leading to the cherry formed by MJ and AF
was set to $4x$, while in a second simulation this branch
length was set to $2x$. The differences in the proportions
of successful trials are small in most cases (between one and two
percent), and, as expected, in most cases the proportion of
successful trials was bigger in the first situation.

(2) The number of gene trees (or rather, the number of
unrooted quartet gene trees for each 4-taxon subset) used
for the reconstruction of the prokaryote species tree in
Section \ref{subsect:prokaryotes} where 21 and 28,
respectively. From Table \ref{tab:prokaryotes} we see that
such a species tree $S$ is likely to be inferred correctly
by Algorithm \ref{alg:improved} if its internal branch
lengths are around $0.5$ (with a probability of more than 90
percent). For branch lengths around $0.2$, however, the
probability for a correct inference of $S$ decreases to
about 40 to 50 percent. (However, there might still be
certain clades on $S$ which can be detected with high
accuracy also for smaller branch lengths.) Clearly, in these
considerations we ignore effects such as horizontal gene
transfer, for whose existence there is evidence in the case
of the nine prokaryotes considered in Section
\ref{subsect:prokaryotes} for some non-ribosomal proteins
(see Teichmann and Mitchison \cite{Teichmann_Mitchison}).

\newpage
\begin{table}
	\centering
\begin{tabular}{ | c | | c | c | c | c | c | c | c |}
	\hline
	$x =$ internal & & & & & & & \\
	branch length & 5 & 10 & 20 & 50 & 100 & 200 & 500 \\
	\hline
	\hline
	0.01 & 0.107 & 0.100& 0.093 & 0.114 & 0.124 & 0.20 & 0.21 \\
	0.05 & 0.162 & 0.210 & 0.230 & 0.330 & 0.430 & 0.67 &
	0.85 \\
	0.1 & 0.285 & 0.334 & 0.440 & 0.662 & 0.815 & 0.93 &
	1.00 \\
	0.2 & 0.428 & 0.580 & 0.751 & 0.943 & 0.990 & 1.00 &
	1.00 \\
	0.5 & 0.755 & 0.890 & 0.980 & 1.000 & 1.000 & & \\
	1.0 & 0.954 & 0.992 & 1.000 & 1.000 & & & \\
	\hline
\end{tabular}
\caption{Simulation results for the 5-taxon caterpillar
tree $(a,(b,(c,(d,e))))$ with all internal branch lengths set
to $x$ (in coalescent units). Table entries are proportions of trials which yielded the correct
unrooted species tree topology. Columns are labelled by
the number of simulated gene trees used in each trial.}
\label{tab:caterpillar}
\end{table}

\begin{table}
	\centering
\begin{tabular}{ | c | | c | c | c | c | c | c | c | c |}
	\hline
	$x =$ int. & & & & & & & & \\
	branch l. & 5 & 10 & 20 & 25 & 50 & 100 & 200 & 500 \\
	\hline
	\hline
	a) 0.1 & 0.005 & 0.027 & 0.068 & 0.102 & 0.230 & 0.320 & 0.33 &
	0.37 \\
	b) 0.1 & 0.007 & 0.018 & 0.064 & 0.087 & 0.218 & 0.282 & 0.31 &
	0.27 \\
	\hline
	a) 0.2 & 0.046 & 0.181 & 0.406 & 0.498 & 0.750 & 0.876 & 0.97
	& 1.00\\
	b) 0.2 & 0.034 & 0.157 & 0.385 & 0.475 & 0.735 & 0.902
	& 0.96 & 1.00\\
	\hline
	a) 0.5 & 0.325 & 0.626 & 0.899 & 0.966 & 0.999 & 1.000 & & \\
	b) 0.5 & 0.303 & 0.607 & 0.911 & 0.957 & 1.000 & 1.000 & & \\
	\hline
	a) 1.0 & 0.703 & 0.873 & 0.966 & 0.980 & 0.999 & 1.000 & & \\
	b) 1.0 & 0.708 & 0.868 & 0.969 & 0.983 & 0.997 & 1.000 & & \\
	\hline
\end{tabular}
\caption{Simulation results for the prokaryote species tree
shown in Figure \ref{fig:all}. All internal branch lengths were
set to $x$ (in coalescent units), except for the edge leading from the root to
(MJ,AF): the length of this edge was set to a) $4x$
and to b) $2x$, respectively. Columns are labelled by the
number of simulated gene trees used in each trial.}
\label{tab:prokaryotes}
\end{table}
\clearpage
\addcontentsline{toc}{section}{References}

\end{document}